\newtheorem{thm}{Theorem}[section]
\newtheorem{prop}[thm]{Proposition}%[section]
\newtheorem{lem}[thm]{Lemma}%[section]
\newtheorem{defi}[thm]{Definition}%[section]
\newtheorem{cor}[thm]{Corollary}%[section]
\newtheorem{claim}[thm]{Claim}%[section]
\newtheorem{conj}{Conjecture}%[section]
\def\B{\,\square \,}
\def\N{\mathbb N}
\begin{document}
\title{Classification of vertex-transitive cubic partial cubes}
\author{Tilen Marc\thanks{Electronic address: \texttt{tilen.marc@imfm.si}}}
\affil{Institute of Mathematics, Physics, and Mechanics, Jadranska 19, 1000 Ljubljana, Slovenia}

\maketitle

\begin{abstract}
Partial cubes are graphs isometrically embeddable into hypercubes. In this paper it is proved that every cubic, vertex-transitive partial cube is isomorphic to one of the following graphs: $K_2 \B C_{2n}$, for some $n\geq 2$, the generalized Petersen graph $G(10,3)$, the cubic permutahedron, the truncated cuboctahedron, or the truncated icosidodecahedron. This  classification is a generalization of results of Brešar et ~al.~from 2004 on cubic mirror graphs, it includes all cubic, distance-regular partial cubes (Weichsel, 1992), and presents a contribution to the classification of all cubic partial cubes.
\end{abstract}

{\bf Keywords:} partial cubes; vertex-transitive graphs; cubic graphs; convex cycles

\sloppy

\section{Introduction}
Hypercubes are considered to be one of the classic examples of graphs that posses many symmetries. It is a fundamental question to ask how those symmetries are preserved on their subgraphs. To our knowledge the first ones who addressed this question were Brouwer, Dejter and Thomassen in 1992 in \cite{brouwer1993highly}. They provided many surprising and diverse examples of vertex-transitive subgraphs of hypercubes, but did not make a classification. 
Based on their results, examples are very diverse hence a classification seems too ambitious. They suggested that one of the reasons for the latter is that the group of symmetries of a subgraph of a hypercube need not be induced by the group of symmetries of the hypercube. 

On the other hand, Weichsel in 1992 \cite{weichsel1992distance} considered distance-regular subgraphs of hypercubes.
% a specific subfamily of vertex-transitive subgraphs of hypercubes.
He derived certain properties of them, and  noticed that all his examples are not just subgraphs, but isometric subgraphs of hypercubes. It was thus a natural decision to focus on the symmetries of partial cubes.
He classified all distance-regular partial cubes based on their girth: hypercubes are the only ones with girth four, the six cycle and the middle level graphs are the only ones with girth six, and even cycles of length at least eight are the only ones with higher girths. Notice that all these graphs are vertex-transitive, therefore they are a subfamily of vertex-transitive subgraphs of hypercubes.

%Independent of this results, a theory of partial cubes was developed. 
Probably the most well-known and studied subfamily of partial cubes are median graphs. It is a well-known result from \cite{mulder1980n} that hypercubes are the only regular -- and thus the only vertex-transitive -- median graphs (for infinite vertex-transitive median graphs check \cite{marc2015vertex}). Due to this result, an extensive study of regular partial cubes has been performed \cite{bonnington2003cubic,brevsar2004cubic,deza2004scale, eppstein2006cubic, klavvzar2003partial, klavvzar2007tribes}. It especially focuses on the cubic case, since the variety of these graphs is far richer than in the case of median graphs. Connections with other geometric structures are established, for example with platonic surfaces \cite{brevsar2004cubic} and simplicial arrangements \cite{eppstein2006cubic}. 

From the point of view of vertex-transitive partial cubes, the most interesting one is the study \cite{brevsar2004cubic}, where a new family of graphs called mirror graphs was introduced. 
% and certain classifications were made. 
Moreover, it was proved that mirror graphs are a subfamily of vertex-transitive partial cubes, and  all mirror graphs that are obtained by cubic inflation (thus are cubic graphs) were classified. In \cite{jaz}, we made an analysis of isometric cycles in a partial cube. In particular, the results imply that there are no cubic partial cubes with girth more than six. This suggests that, as in the case of Weichsel's distance-regular partial cubes, also cubic, vertex-transitive partial cubes should be approached from the study of their girth. In addition, every automorphism of a partial cube $G$ preserves the so-called $\Theta$-classes of $G$, therefore every symmetry of $G$ is induced by a symmetry of a hypercube.

In the present paper we form a natural connection between the study of vertex-transitive subgraphs of hypercubes and the study of cubic partial cubes: we classify all cubic, vertex-transitive partial cubes. The results can be seen as a generalization of results from \cite{brevsar2004cubic}, since all mirror graphs are vertex-transitive partial cubes, and in the cubic case results from \cite{weichsel1992distance}, since every distance-regular partial cube is vertex-transitive.

Let $K_2$ denote the complete graph of order 2, $C_k$ the cycle of length $k$, and $G(n,k)$ the generalized Petersen graph with parameters $3\leq n, 1\leq k  < n/2$. The main result of this paper is the following:

\begin{thm}\label{thm:thethm}
If $G$ is a finite, cubic, vertex-transitive partial cube, then $G$ is isomorphic to one of the following: $K_2 \B C_{2n}$, for some $n\geq 2$, $G(10,3)$, the cubic permutahedron, the truncated cuboctahedron, or the truncated icosidodecahedron.
\end{thm}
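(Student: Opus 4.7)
The approach is to use vertex-transitivity to fix a single local pattern at one vertex and then reconstruct $G$ globally from its $\Theta$-class structure. Since every automorphism of a partial cube permutes $\Theta$-classes, vertex-transitivity forces the pattern of interactions among the three $\Theta$-classes meeting each vertex of the cubic graph $G$ to be the same at every vertex. By \cite{jaz} the girth of $G$ is either $4$ or $6$.

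For a fixed vertex $v$, let $k \in \{0,1,2,3\}$ be the number of unordered pairs of $\Theta$-classes at $v$ that together span a convex $4$-cycle; by vertex-transitivity $k$ is independent of $v$. The plan is to show that the four possible values of $k$ correspond bijectively to the five graphs in the statement: $k=0$ to the Desargues graph $G(10,3)$; $k=1$ to one of the three truncated Archimedean polyhedra in the statement; $k=2$ to $K_2 \B C_{2n}$ with $n \geq 3$; and $k=3$ to the $3$-cube $K_2 \B C_4$.

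The three easier cases I would handle as follows. For $k=3$ the three incident $\Theta$-classes pairwise span $4$-cycles, so their union is a convex $3$-cube at $v$; this already accounts for all $\Theta$-classes of $G$, forcing the isometric embedding of $G$ into a $3$-cube and hence $G \cong K_2 \B C_4$. For $k=2$ there is a distinguished $\Theta$-class $F$ at $v$ sharing a $4$-cycle with each of the other two classes; I would show that $F$ is globally well-defined, that contracting $F$ yields a connected $2$-regular graph, and conclude $G \cong K_2 \B C_{2n}$. For $k=0$ the girth is $6$ and all three pairs of incident classes span convex hexagons at $v$; a direct reconstruction by extending the three hexagons through $v$ and its neighbours should force $G$ to be the Desargues graph on $20$ vertices.

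The heart of the proof is $k=1$. Here each vertex lies in a unique convex $4$-cycle and in two further convex cycles of even lengths $a,b \geq 6$, and by vertex-transitivity the unordered triple $(4,a,b)$ is fixed. Applying an Euler-characteristic or angle-sum estimate to the two-dimensional cell complex formed by the convex cycles through $v$, I expect to derive the spherical inequality
\[
\tfrac{1}{a}+\tfrac{1}{b} > \tfrac{1}{4},
\]
whose only even solutions with $a,b\geq 6$ are $(a,b) \in \{(6,6),(6,8),(6,10)\}$. Each triple must then be shown to determine $G$ uniquely, producing the cubic permutahedron, the truncated cuboctahedron, and the truncated icosidodecahedron respectively. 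The principal obstacle is precisely this last rigidity step: one must argue that the convex-cycle complex of $G$ is globally a cell decomposition of the $2$-sphere, so that the local inequality genuinely applies, and one must exclude the borderline flat-tiling triples $(4,6,12)$ and $(4,8,8)$ together with any exotic non-spherical realisations. This is where the fine structure of $\Theta$-classes and the convex-cycle results of \cite{jaz} will carry most of the weight.
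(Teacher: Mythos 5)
Your case division by the number $k$ of incident $\Theta$-class pairs spanning a convex $4$-cycle is essentially the paper's division by the invariants $g_1\leq g_2\leq g_3$ (lengths of shortest convex cycles through the three pairs of edges at a vertex), and your treatment of $k=3$, $k=2$ and $k=0$ tracks the paper's Lemma \ref{lem:two4cycles}, Corollary \ref{cor:basic} and Lemma \ref{lem:H} closely enough to be workable, though the $k=0$ reconstruction of $G(10,3)$ is far from a one-liner (the paper needs the isometric subgraph $X$ from \cite{jaz} and a page of careful extension arguments). The real problem is the case $k=1$, which is the heart of the theorem, and there your plan has a genuine gap rather than a missing detail.

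The Euler-characteristic computation gives $\chi=n\bigl(\tfrac{1}{a}+\tfrac{1}{b}-\tfrac{1}{4}\bigr)$, so the inequality $\tfrac{1}{a}+\tfrac{1}{b}>\tfrac{1}{4}$ is a \emph{consequence} of $\chi>0$, not something you can derive from the local angle count: if the convex-cycle complex were a torus you would get exactly the flat triples $(4,6,12)$ and $(4,8,8)$, and on a higher-genus surface arbitrarily large $a,b$ with $\tfrac{1}{a}+\tfrac{1}{b}<\tfrac14$ are consistent with the count. You identify this as the principal obstacle but propose no mechanism to overcome it. Worse, even forming the cell complex is not free: you need every incident pair of edges to lie in a convex cycle at all (the paper's Lemma \ref{lem:allincident}), you need each vertex to lie in exactly one relevant cycle per pair, and above all you need distinct convex cycles to share at most one edge --- but convex cycles in partial cubes can intertwine along paths of length $\geq 2$ (Claim \ref{clm:convex_always_intert}), so the disc-pasting need not produce a surface. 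The paper sidesteps the whole surface argument for this case: it first proves combinatorially that $g_2(G)=6$ via the machinery of convex traverses (Lemmas \ref{lem:paste2} and \ref{lem:constHs} and Propositions \ref{prop:nointertwine} and \ref{prop:yesintertwine}, which in particular control intertwining), reserving the Euler argument only for the bounded $(4,6,6)$ case where $\chi=n/12>0$ comes for free; it then shows the vertex stabilizers are trivial, so $G$ is a Cayley graph, reads off Coxeter relations $(rg)^2,(gb)^3,(br)^{k/2}$ from the three edge colours, and invokes the classification of finite Coxeter groups to force $k\in\{8,10\}$. That group-theoretic step is what replaces your missing sphericity argument, and without something of comparable strength your proof does not close.
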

\begin{figure}[h]
\centering
\begin{subfigure}[b]{0.3\textwidth}
\begin{tikzpicture}[scale=1.2]
\tikzstyle{vertex}=[draw, circle,fill=white, inner sep=0pt, minimum size =5pt]        

\foreach \a in {0, 36,...,359}
      \draw (\a:1) -- (\a:2);
\foreach \a in {0, 36,...,359}
      \draw (\a+36:2) -- (\a:2);
 \foreach \a in {0, 36,...,359}
      \draw (\a+108:1) -- (\a:1);           
\foreach \a in {0, 36,...,359}
      \node[vertex] at (\a:2) {};
\foreach \a in {0, 36,...,359}
      \node[vertex] at (\a:1) {};

\end{tikzpicture}
\caption{$G(10,3)$}
\end{subfigure}
\qquad
\begin{subfigure}[b]{0.3\textwidth}
\begin{tikzpicture}[scale=1]
\tikzstyle{vertex}=[draw, circle, inner sep=0pt, minimum size =5pt]

\node (0) at (0,1,2) [vertex] {};
\node (1) at (0,2,1) [vertex] {};
\node (2) at (1,0,2) [vertex] {};
\node (3) at (1,2,0) [vertex] {};
\node (4) at (2,1,0) [vertex] {};
\node (5) at (2,0,1) [vertex] {};

\node (6) at (0,-1,2) [vertex] {};
\node (7) at (0,2,-1) [vertex] {};
\node (8) at (-1,0,2) [vertex] {};
\node (9) at (-1,2,0) [vertex] {};
\node (10) at (2,-1,0) [vertex] {};
\node (11) at (2,0,-1) [vertex] {};

\node (12) at (0,1,-2) [vertex] {};
\node (13) at (0,-2,1) [vertex] {};
\node (14) at (1,0,-2) [vertex] {};
\node (15) at (1,-2,0) [vertex] {};
\node (16) at (-2,1,0) [vertex] {};
\node (17) at (-2,0,1) [vertex] {};

\node (18) at (0,-1,-2) [vertex] {};
\node (19) at (0,-2,-1) [vertex] {};
\node (20) at (-1,0,-2) [vertex] {};
\node (21) at (-1,-2,0) [vertex] {};
\node (22) at (-2,-1,0) [vertex] {};
\node (23) at (-2,0,-1) [vertex] {};

\draw (1) -- (3); 
\draw (1) -- (9);
\draw (7) -- (3); 
\draw (7) -- (9);

\draw (0) -- (2); 
\draw (8) -- (6);
\draw (0) -- (8); 
\draw (6) -- (2);
 
\draw (4) -- (5); 
\draw (10) -- (11);
\draw (4) -- (11); 
\draw (5) -- (10);

\draw[dotted] (12) -- (14); 
\draw[dotted] (18) -- (20);
\draw[dotted] (12) -- (20); 
\draw[dotted] (18) -- (14);

\draw (13) -- (15); 
\draw[dotted] (21) -- (19);
\draw (13) -- (21); 
\draw[dotted] (15) -- (19);

\draw (16) -- (17); 
\draw[dotted] (22) -- (23);
\draw[dotted] (16) -- (23); 
\draw (17) -- (22);

\draw (0) -- (1);
\draw (6) -- (13); 
\draw[dotted] (7) -- (12);
\draw[dotted] (18) -- (19); 

\draw (2) -- (5);
\draw (8) -- (17); 
\draw[dotted] (14) -- (11);
\draw[dotted] (20) -- (23); 

\draw (3) -- (4);
\draw (9) -- (16); 
\draw (15) -- (10);
\draw (21) -- (22); 
\end{tikzpicture}
\caption{Cubic permutahedron}
\end{subfigure}
%\caption{Cubic, vertex-transitive partial cubes}
%\label{fig:finalgraphs}
%\end{figure}
%
%\begin{figure}[h]
%\centering

\vspace{10pt}
\begin{subfigure}[b]{0.3\textwidth}
\includegraphics[scale=1.4]{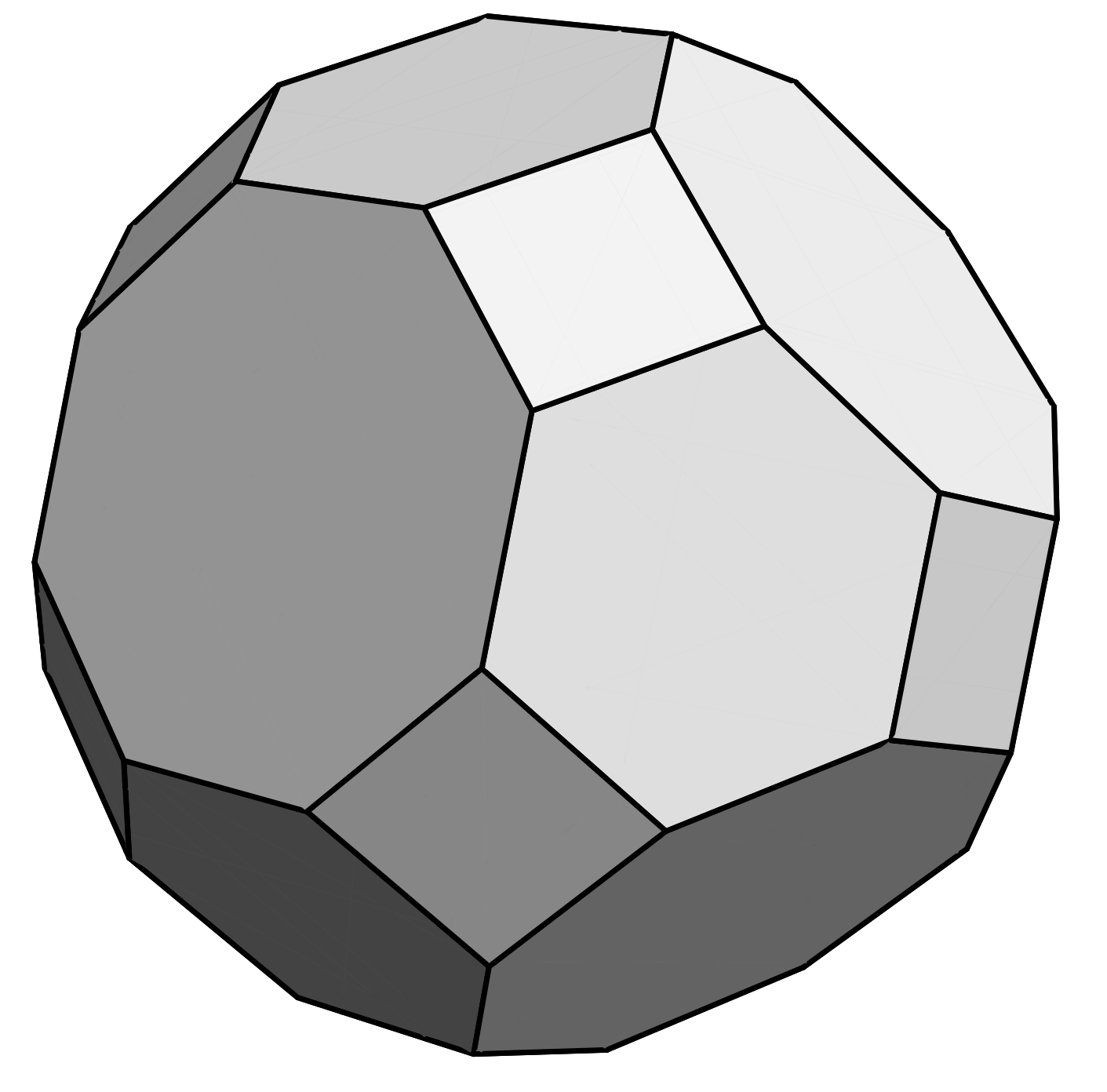}
\caption{Truncated cuboctahedron}
\end{subfigure}
\qquad
\begin{subfigure}[b]{0.4\textwidth}
\includegraphics[scale=0.6]{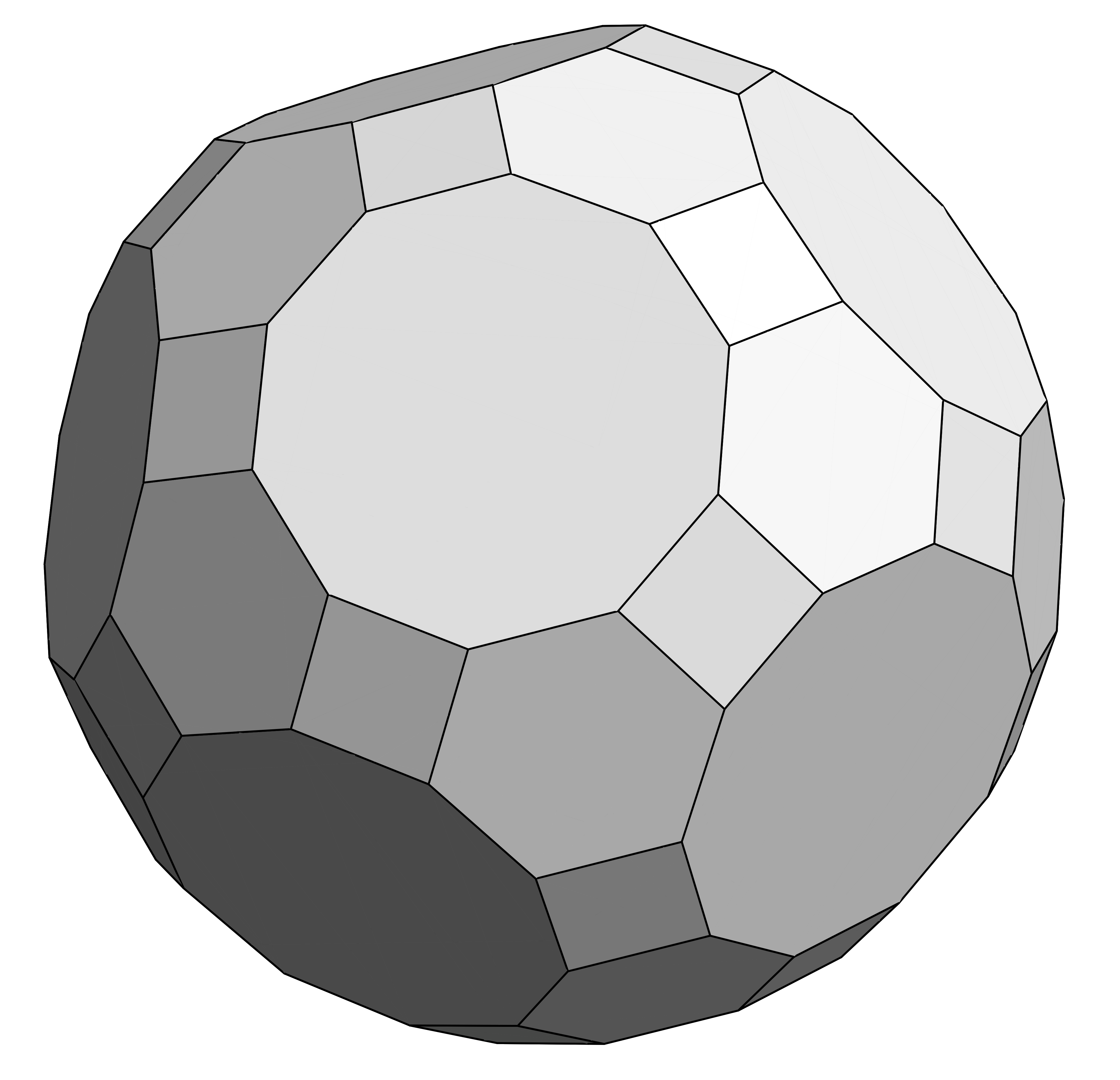}
\caption{Truncated icosidodecahedron}
\end{subfigure}
\caption{The four sporadic examples of cubic, vertex-transitive partial cubes}
\label{fig:finalgraphs2}
\end{figure}

To our surprise, the variety of the graphs from Theorem \ref{thm:thethm} (cf.~Figure \ref{fig:finalgraphs2}) is small, and all graphs are classical graphs that were studied from many (especially geometric) views. We point out that the cubic permutahedron, the truncated cuboctahedron, and the truncated icosidodecahedron are cubic inflations of graphs of platonic surfaces \cite{brevsar2004cubic}, $K_2 \B C_{2n}$ are the only cubic Cartesian products of (vertex-transitive) partial cubes (this includes also the hypercube $Q_3\cong K_2 \B C_4$), while $G(10,3)$ is the only known non-planar cubic partial cube and is isomorphic to the middle level graph of valence three \cite{klavvzar2003partial}.

\section{Preliminaries}
The paper is organized as follows. In this section we briefly present the definitions and results needed to prove Theorem \ref{thm:thethm}, while in the next section we give a proof of it.

We will consider only simple (finite) graphs in this paper. The \emph{Cartesian product} $G\, \square \, H$ of graphs $G$ and $H$ is the graph with the vertex set $V(G) \times V(H)$ and the edge set consists of all pairs $\{(g_1,h_1),(g_2,h_2)\}$ of vertices with $\{g_1,g_2\} \in E(G)$ and $h_1=h_2$, or $g_1=g_2$ and $\{h_1,h_2\} \in E(H)$. \emph{Hypercubes} or \emph{$n$-cubes} are the Cartesian products of $n$-copies of $K_2$. We say a subgraph $H$ of $G$ is \emph{isometric} if for every pair of vertices in $H$ also some shortest path in $G$ connecting them is in $H$. It is \emph{convex} if for every pair of vertices in $H$ all shortest path in $G$ connecting them are in $H$. A \emph{partial cube} is a graph that is isomorphic to an isometric subgraph of some hypercube. The \emph{middle level graph} of valency $n\geq 1$ is the induced subgraph of a hypercube of dimension $2n-1$ on the vertices that have precisely $n$ or $n-1$  coordinates equal to 1, where the coordinates correspond to the factors of the Cartesian product of copies of $K_2$ on vertices $\{0,1\}$.

For a graph $G$, we define the relation $\Theta$ on the edges of $G$ as follows: $ab \Theta xy$ if $d(a,x) + d(b,y)\neq d(a,y) + d(b,x)$, where $d$ is the shortest path distance function. In partial cubes $\Theta$ is an equivalence relation \cite{winkler1984isometric}, and we write $F_{uv}$ for the set of all edges that are in relation $\Theta$ with $uv$. We define $W_{uv}$ as the subgraph induced by all vertices that are closer to vertex $u$ than to $v$, that is $W_{uv}=\langle \{ w: \   d(u,w) < d(v, w) \} \rangle $. In any partial cube $G$, the sets $V(W_{uv})$ and $V(W_{vu})$ partition $V(G)$, with  $F_{uv}$ being the set of edges joining them. We define $U_{uv}$ to be the subgraph induced by the set of vertices in $W_{uv}$ which have a neighbor in $W_{vu}$. For  details and further results, see \cite{Hammack:2011a}.

We shall need a few simple results about partial cubes. All partial cubes are bipartite, since hypercubes are. If $u_1v_1 \Theta u_2v_2$ with $u_2 \in U_{u_1v_1}$, then $d(u_1,u_2)=d(v_1,v_2)$. A path $P$ of a partial cube is  a shortest path or a \emph{geodesic} if and only if  all of its edges belong to pairwise different $\Theta$-classes. If $C$ is a closed walk passing edge $uv$, then $C$ passes an edges in $F_{uv}$ at least two times. By so called Convexity lemma \cite{Hammack:2011a}, convex subgraphs in partial cubes can be characterized as induced, connected subgraphs such that no edge with exactly one end in the subgraph is in relation $\Theta$ with any edge in the subgraph. For the details, we again refer to \cite{Hammack:2011a}.

An \emph{automorphism} of a graph $G$ is a permutation of $V(G)$ that preserves the adjacency of vertices. Graph $G$ is \emph{vertex-transitive} if for every pair $u,v\in V(G)$ there exists an automorphism of $G$ that maps $u$ to $v$. A special subfamily of vertex-transitive graphs comes from groups: For a group $A$ with generator set $S$, such that $S^{-1}=S$ and $\textrm{id}\notin S$,  the \emph{Cayley graph} $\textrm{Cay}(A,S)$ is a graph with vertex set $A$, elements $\alpha_1,\alpha_2$ being adjacent if and only if $\alpha_1\alpha_2^{-1}\in S$. The stabilizer of a vertex $v$ in $G$ is the subgroup of all the automorphisms of $G$ that map $v$ to $v$.
By \cite{sabidussi1958class}, if the stabilizers of a vertex-transitive graph $G$ are trivial, then $G$ is a Cayley graph.

%For a graph $G$, we shall denote with $g(G)$ the girth of $G$, i.e.~the length of a shortest cycle in $G$. For a vertex $v$ in a graph $G$, denote with $N(v)$ the set of all vertices adjacent to $v$. A \emph{cover} of $G$ is a graph $H$ with a surjective homomorphism $f$ from $H$ to $G$, such that $f$ maps bijectively $N(v)$ to $N(f(v))$, for every $v\in V(G)$. A \emph{2-cover} is a cover such that the preimage of every vertex of $G$ has exactly two elements.

%A vertex-transitive graph is a graph whose automorphism group acts transitively on the set of vertices.

A major part of this paper depends on results developed in \cite{jaz}. The following definition was introduced to study isometric cycles in partial cubes:

\begin{defi}
Let $v_1u_1\Theta v_2u_2$ in a partial cube $G$, with $v_2 \in U_{v_1u_1}$. Let $D_1,\ldots, D_n$ be a sequence of isometric cycles such that $v_1u_1$ lies only on $D_1$, $v_2u_2$ lies only on $D_n$, and each pair $D_i$ and $D_{i+1}$, for $i\in \{1,\ldots,n-1\}$, intersects in exactly one edge from $F_{v_1u_1}$, all the other pairs do not intersect. If the shortest path from $v_1$ to $v_2$ on the union of $D_1,\ldots, D_n$ is isometric in $G$, then we call $D_1,\ldots, D_n$ a \emph{traverse} from $v_1u_1$ to $v_2u_2$. If all the cycles $D_1,\ldots, D_n$ are convex, we call it a \emph{convex traverse}.
\end{defi}

If $D_1,\ldots, D_n$ is a traverse from $v_1u_1$ to $v_2u_2$, then also the shortest path from $u_1$ to $u_2$ on the union of $D_1,\ldots, D_n$ is isometric in $G$. We will call this $u_1,u_2$-shortest path the \emph{$u_1,u_2$-side of the traverse}  and, similarly, the shortest $v_1,v_2$-path on the union of $D_1,\ldots, D_n$ the \emph{$v_1,v_2$-side of the traverse}. The length of these two shortest paths is the \emph{length of the traverse}. It is not difficult to prove the following useful result.

\begin{lem}[\cite{jaz}]\label{lem:cycles}
Let $v_1u_1\Theta v_2u_2$ in a partial cube $G$. Then there exists a convex traverse from $v_1u_1$ to $v_2u_2$.
\end{lem}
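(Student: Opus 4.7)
The natural strategy is induction on $d := d(v_1,v_2)$, which equals $d(u_1,u_2)$ since $v_1u_1\Theta v_2u_2$ (stated among the preliminary facts).

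For the base case $d=1$, the edge $v_1v_2$ lies in some $\Theta$-class $F'\neq F_{v_1u_1}$, and parallel translation across $F_{v_1u_1}$ (i.e.\ the relation $v_1u_1\Theta v_2u_2$ together with adjacency of $v_1$ and $v_2$) forces $u_1u_2$ to be an edge in $F'$ as well. The four vertices $v_1,u_1,u_2,v_2$ therefore span a 4-cycle, which is induced by bipartiteness and is in fact convex: any chord or external common neighbor of two opposite vertices would, by a short $\Theta$-class count on the resulting short cycles, produce two distinct edges at a single vertex sharing a $\Theta$-class, which is impossible. Taking $n=1$ and $D_1$ equal to this 4-cycle yields the required convex traverse.

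For the inductive step with $d\geq 2$, fix a geodesic $v_1=x_0,x_1,\ldots,x_d=v_2$ in $G$. The edges $v_1u_1$ and $v_1x_1$ are incident at $v_1$ and lie in distinct $\Theta$-classes, because a geodesic between vertices of $W_{v_1u_1}$ cannot cross $F_{v_1u_1}$. I would then invoke the fact, developed in the analysis of isometric cycles in~\cite{jaz}, that in a partial cube any two incident edges in different $\Theta$-classes lie together on a convex cycle; let $D$ be such a convex cycle containing $v_1u_1$ and $v_1x_1$. The edge of $D$ opposite $v_1u_1$ then belongs to $F_{v_1u_1}$; write it as $x_1x_1'$ with $x_1\in U_{v_1u_1}$. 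Since $d(x_1,v_2)=d-1$, the inductive hypothesis applied to $x_1x_1'\Theta v_2u_2$ supplies a convex traverse $D_2,\ldots,D_n$ from $x_1x_1'$ to $v_2u_2$. Setting $D_1:=D$, the candidate convex traverse from $v_1u_1$ to $v_2u_2$ is $D_1,D_2,\ldots,D_n$.

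The $v_1,v_2$-side of this candidate is the edge $v_1x_1$ concatenated with the inductive $x_1,v_2$-side on $D_2\cup\cdots\cup D_n$, of total length $1+(d-1)=d$, hence a geodesic in $G$. What remains to check are the combinatorial disjointness conditions: $v_1u_1$ appears only in $D_1$, $D_1\cap D_2=\{x_1x_1'\}$, and $D_1\cap D_i=\emptyset$ for $i\geq 3$. \emph{This disjointness verification is the main obstacle.} The plan is to apply the Convexity Lemma to the convex cycle $D_1$ together with the inductive facts that each $D_i$ ($i\geq 2$) contains exactly one edge of $F_{v_1u_1}$ and that the sub-traverse $D_2,\ldots,D_n$ lies on the $x_1$-side of $F_{v_1x_1}$ (since its $v_1,v_2$-side is a geodesic starting at $x_1$): any unexpected shared vertex would force an edge with exactly one endpoint in $D_1$ to be $\Theta$-related to an edge of $D_1$, contradicting convexity of $D_1$. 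Combining this with the disjointness already granted by the induction for $D_2,\ldots,D_n$ completes the argument.
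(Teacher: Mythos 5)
First, note that the paper itself gives no proof of this lemma: it is imported verbatim from \cite{jaz} with the remark that it ``is not difficult to prove'', so there is no in-paper argument to compare against and your proposal has to stand on its own. It does not. The load-bearing step is the appeal to the ``fact'' that in a partial cube any two incident edges in different $\Theta$-classes lie on a common convex cycle. That statement is false in general (any tree, or any partial cube with a bridge, is a counterexample), and no restriction of it is available to you here: the paper's own Lemma~\ref{lem:allincident} proves exactly such a statement in the special case of cubic, vertex-transitive partial cubes, and its proof \emph{uses} Lemma~\ref{lem:cycles}. So inside this paper's logical architecture your argument is circular, and outside it you are citing a non-theorem. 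Producing the first convex cycle $D_1$ with the right properties is precisely the content of the lemma, not something that can be outsourced.

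Even granting a convex cycle $D$ through $v_1u_1$ and $v_1x_1$, the next step is wrong as written: the second edge of $D$ in $F_{v_1u_1}$ is the edge antipodal to $v_1u_1$, which is incident to $x_1$ only when $D$ is a $4$-cycle. If $D$ is a $2k$-cycle with $k\geq 3$ (and the graph may have no $4$-cycle through $v_1u_1$ at all, e.g.\ $G=C_6$), that edge is $w_{k-1}w_k$ for a vertex $w_{k-1}$ at distance $k-1$ from $v_1$, and nothing in your construction guarantees $d(w_{k-1},v_2)=d-(k-1)$ --- the far side of $D$ need not head toward $v_2$ at all. This simultaneously breaks the induction (no guaranteed decrease in distance) and the requirement that the concatenated $v_1,v_2$-side be a geodesic. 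Finally, the disjointness conditions in the definition of a traverse, which you yourself flag as ``the main obstacle'', are left as a plan rather than a proof; the Convexity Lemma by itself does not prevent $D_1$ from meeting $D_i$ for $i\geq 3$ or from overlapping $D_2$ in more than the designated edge. The inductive peeling-off of one convex cycle at a time is a reasonable outline, but all three of the points above are genuine gaps, and the first two are fatal to the argument as it stands.
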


We shall also need the following definition:

\begin{defi}
Let $D_1=(v_0v_1\ldots v_mv_{m+1}\ldots v_{2m+2n_1-1})$ and $D_2=(u_0u_1\ldots u_mu_{m+1}\ldots u_{2m+2n_2-1})$ be isometric cycles with $u_0=v_0,\ldots, u_m=v_m$ for $m\geq 2$, and all other vertices pairwise different. We say that $D_1$ and $D_2$ \emph{intertwine} and define $i(D_1,D_2)=n_1+n_2\geq 0$ as the \emph{residue of intertwining}.
\end{defi}

We can calculate the residue of intertwining as $i(D_1,D_2)=(l_1+l_2-4m)/2$, where $l_1$ is the length of $D_1$, $l_2$ the length of $D_2$, and $m$ the number of edges in the intersection. To finish this section we show two simple, but useful properties of convex cycles.

\begin{claim}\label{clm:convex_always_intert}
If two convex cycles share more than an edge or a vertex, then they intertwine.
\end{claim}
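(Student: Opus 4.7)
My plan rests on two elementary facts: the intersection of two convex subgraphs is convex, and a convex subgraph of a convex cycle in a partial cube must be either the whole cycle or a single path.

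First, I would observe that $C_1 \cap C_2$ is convex in $G$, since any two vertices in the intersection are joined by a shortest path of $G$ that, by convexity of each $C_i$, lies in both $C_1$ and $C_2$. Because $C_1$ is itself convex, every geodesic of $G$ between two vertices of $C_1$ already lies on $C_1$; applying convexity of $C_1 \cap C_2$ inside $C_1$ then forces $C_1 \cap C_2$ to be a subpath of the cycle $C_1$ (possibly all of $C_1$), and symmetrically a subpath of $C_2$.

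If $C_1 \cap C_2 = C_1$, then $C_1 \subseteq C_2$ and, since both are cycles, $C_1 = C_2$, in which case the two cycles trivially intertwine. Otherwise the intersection is a proper common subpath of length $m$, and the hypothesis that the overlap exceeds an edge or a vertex gives $m \geq 2$. Label $C_1 = (v_0 v_1 \ldots v_{l_1 - 1})$ and $C_2 = (u_0 u_1 \ldots u_{l_2 - 1})$ so that the shared path is $v_0 \ldots v_m = u_0 \ldots u_m$. The remaining vertices must be pairwise distinct, for any coincidence $v_i = u_j$ with $i, j > m$ would add that vertex to $C_1 \cap C_2$, contradicting the identification of the intersection with exactly this path. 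Since partial cubes are bipartite, $l_1$ and $l_2$ are even, so writing $l_1 = 2m + 2n_1$ and $l_2 = 2m + 2n_2$ with $n_1, n_2 \geq 0$ matches the definition of intertwining exactly, with residue $n_1 + n_2$.

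The main obstacle I foresee is the second step: ruling out the possibility that $C_1 \cap C_2$ consists of several disjoint arcs along the two cycles. This is what forces the double use of convexity, once to propagate convexity to the intersection and once to trap geodesics on each $C_i$, together reducing the question to the harmless fact that a convex subset of a cycle (as a metric space) is an arc. Once this is secured, identifying the parameters $m$, $n_1$, $n_2$ is routine bookkeeping.
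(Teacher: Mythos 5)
Your proof is correct and follows essentially the same route as the paper: both arguments use convexity to force the intersection of the two cycles to be a single geodesic arc (the paper does this by taking two intersection vertices at maximal distance, you by noting the intersection is convex and a convex subset of a convex cycle is an arc), after which the identification with the definition of intertwining is immediate.
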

\begin{proof}
Suppose that two distinct convex cycles $D_1,D_2$ share two non-adjacent vertices. Let $v_1,v_2 \in V(D_1)\cap V(D_2)$ with maximal distance between them. Since $D_1, D_2$ are convex and distinct, there exists at most one shortest $v_1,v_2$-path and it must be in $V(D_1)\cap V(D_2)$. If any other vertex $u$ is in $V(D_1)\cap V(D_2)$, then also shortest $u,v_1$-, $u,v_2$-paths must be in $V(D_1)\cap V(D_2)$, contradicting the choice of $v_1,v_2$. Thus $D_1$ and $D_2$ share exactly the shortest $v_1,v_2$ path; by the definition they intertwine.
\end{proof}

In \cite{jaz} a similar statement was proved: if at least two isometric cycles in a partial cube share more than an edge or a vertex, then there must be at least two isometric cycles that intertwine.

\begin{claim}\label{clm:4cycle_no_intert}
Every 4-cycle in a partial cube is convex and can share at most an edge or a vertex with any other convex cycle.
\end{claim}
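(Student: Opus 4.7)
The statement has two parts, and I would handle them in order, leaning on the Convexity lemma and on Claim~\ref{clm:convex_always_intert} stated just above.

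For the first part (every $4$-cycle $C = v_1v_2v_3v_4$ is convex), I would begin by noting that $C$ is induced, because the partial cube is bipartite and any chord would connect two vertices in the same part. Next I would invoke the Convexity lemma: convexity of $C$ follows once I know that no edge $e$ with exactly one endpoint in $C$ lies in the same $\Theta$-class as any edge of $C$. A direct distance computation shows that opposite edges of $C$ are $\Theta$-related, so the four edges of $C$ partition into exactly two $\Theta$-classes, $\{v_1v_2,v_3v_4\}$ and $\{v_2v_3,v_4v_1\}$. If some edge $e = v_1 w$ with $w \notin V(C)$ were $\Theta$-related to an edge of $C$, transitivity would place $e$ in one of these two $\Theta$-classes, and hence in the same class as an edge of $C$ incident to $v_1$. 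But in a partial cube the $\Theta$-classes form matchings (they correspond to hypercube coordinates), so two edges of the same $\Theta$-class cannot share an endpoint. This contradiction, together with the symmetric cases at $v_2,v_3,v_4$, proves convexity.

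For the second part I would argue by contradiction: suppose $C$ is a $4$-cycle and $D$ is a convex cycle with $D\neq C$ such that $C$ and $D$ share more than a single edge or vertex. Since $C$ is convex by the first part, Claim~\ref{clm:convex_always_intert} shows that $C$ and $D$ intertwine, so they share a path of some length $m\geq 2$. Applying the length formula $l(C) = 2m + 2n_1$ with $l(C) = 4$ and $n_1\geq 0$ forces $m = 2$ and $n_1 = 0$. Writing the shared path as $v_0 v_1 v_2$, the $4$-cycle $C$ then consists of this path together with a second $v_0,v_2$-geodesic $v_0 v_3 v_2$ through the remaining vertex $v_3$, which lies outside $D$ by the ``all other vertices pairwise different'' clause in the definition of intertwining. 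But $v_0,v_2 \in V(D)$ are at distance $2$ in $G$, and $D$ is convex, so every shortest $v_0,v_2$-path must lie in $D$; since the path $v_0 v_3 v_2$ does not, we obtain the desired contradiction.

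\textbf{Anticipated obstacles.} The argument is essentially a bookkeeping exercise once the Convexity lemma and Claim~\ref{clm:convex_always_intert} are in hand, so there is no serious technical hurdle. The only point requiring a little care is recalling (and, if pressed, re-justifying) the standard fact that a $\Theta$-class in a partial cube is a matching, which is what rules out an ``outside'' edge at $v_1$ being in the same $\Theta$-class as an edge of $C$ incident to $v_1$; this follows from the isometric embedding into a hypercube, where each $\Theta$-class corresponds to flipping a single coordinate.
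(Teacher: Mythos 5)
Your proof is correct. For the first half (convexity of $4$-cycles) you take a genuinely different route from the paper: the paper simply observes that hypercubes, and hence partial cubes, contain no $K_{2,3}$ (two vertices at distance $2$ have at most two common neighbors), which immediately forces every $4$-cycle to be convex; you instead verify the hypotheses of the Convexity Lemma, computing that the four edges split into two $\Theta$-classes of opposite edges and then using the fact that each $\Theta$-class is a matching to exclude an outside edge in relation $\Theta$ with an edge of the cycle. Both arguments are sound; the paper's is shorter and avoids the Convexity Lemma, while yours is more mechanical and would generalize to checking convexity of other small cycles. For the second half the two proofs are essentially the same: both invoke Claim~\ref{clm:convex_always_intert} to get an intertwining, pin down the shared path as one of length $2$, and then contradict the convexity of the other cycle $D$ via the fourth vertex of the $4$-cycle, which is a common neighbor of two vertices of $D$ at distance $2$ but lies outside $D$. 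Your use of the length formula $2m+2n_1=4$ to force $m=2$, $n_1=0$ is a clean way to make that step explicit.
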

\begin{proof}
Since hypercubes are bipartite and have no induced $K_{2,3}$ (the complete bipartite graph with the bipartition into two and three vertices), the same holds for partial cubes. Therefore every 4-cycle is convex.  Moreover, if a convex cycle $D$ shares more than a vertex or an edge with a 4-cycle, then by Claim \ref{clm:convex_always_intert} it must share exactly two incident edges, say $uv_1$ and $uv_2$. This implies that there are two shortest $v_1,v_2$-paths in $D$, thus $D$ is a 4-cycle. A contradiction, since $K_{2,3}$ is not an induced subgraph of a partial cube.
\end{proof}
Note that in a cubic graph two cycles cannot share exactly a vertex. This fact will be used throughout the paper, sometimes possibly not explicitly pointed out.

\section{Proof of Theorem \ref{thm:thethm}}

We start the proof of Theorem \ref{thm:thethm} by analyzing a simple case that strongly determines the structure of a cubic, vertex-transitive partial cube.

\begin{lem}\label{lem:two4cycles}
If asomevertex of a cubic, vertex-transitive partial cube $G$ lies on two 4-cycles, then $G\cong K_2\, \square \, C_{2n}$, for some $n\geq 2$.
\end{lem}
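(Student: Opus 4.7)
The plan is to use vertex-transitivity to propagate the two-4-cycle structure across all of $G$, isolate a distinguished $\Theta$-class that sits between two parallel cycles, and read off $G\cong K_2\B C_{2n}$. First I would note that by vertex-transitivity every vertex lies on at least two 4-cycles; by Claim \ref{clm:4cycle_no_intert} together with the remark that two cycles in a cubic graph cannot share only a vertex, any two 4-cycles through a common vertex share exactly one edge. I would then split into the cases that every vertex lies on exactly two, or on three, 4-cycles.

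In the three-4-cycles case, the three cycles at $v$ correspond to the three pairs of edges at $v$; tracking the fourth vertex of each and repeatedly invoking the absence of induced $K_{2,3}$ in partial cubes, the local structure is forced onto exactly eight vertices with all degrees saturated, so $G\cong Q_3 = K_2\B C_4$. In the main case, I would call the unique edge shared by the two 4-cycles at $v$ the \emph{special edge} at $v$; writing it $va$, both 4-cycles also pass through $a$ and therefore coincide with $a$'s two 4-cycles, so $va$ is special at $a$ as well. Hence special edges form a perfect matching $M$ of $G$. Writing the 4-cycles at $v$ as $v,b,x,a$ and $v,c,y,a$, the $\Theta$-equivalence of opposite edges of a convex 4-cycle gives $va\,\Theta\,bx$ and $va\,\Theta\,cy$; by walking through $G$ along special and non-special edges and checking at each step that the special edges at adjacent vertices are $\Theta$-equivalent, one obtains $M\subseteq F$, where $F=F_{va}$. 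Since $\Theta$-classes in partial cubes are matchings (a short bipartiteness argument), $F=M$.

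Consequently, the semi-cubes $\langle W_{va}\rangle$ and $\langle W_{av}\rangle$ are convex, hence connected, and each of their vertices has lost exactly its unique $F$-edge, making them connected 2-regular subgraphs — i.e.\ even cycles of common length $2n$ (matched bijectively by $F$). To conclude, if $v_1,v_2$ are adjacent on one side with $F$-partners $a_1,a_2$, then the 4-cycle through $v_1v_2$ and $v_1a_1$ must have fourth vertex equal to the $F$-partner of $v_2$, which is $a_2$, so $a_1a_2\in E(G)$ and $G\cong K_2\B C_{2n}$.

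The main obstacle I anticipate is establishing $F=M$: the forward inclusion $M\subseteq F$ depends on propagating $\Theta$-equivalence through adjacent 4-cycles and across $M$-edges, which needs care to ensure all special edges are reached; the reverse inclusion relies on the matching property of $\Theta$-classes. A secondary subtlety is cleanly handling the three-4-cycles subcase, where the local eight-vertex closure must be verified without leakage, using the forbidden $K_{2,3}$ to identify coincidences among the newly introduced neighbors.
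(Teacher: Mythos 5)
Your proof is correct, but it takes a genuinely different route from the paper's. The paper argues locally and inductively: starting from the shared edge $v_0u_0$ it forces, via vertex-transitivity, a new 4-cycle $(v_iv_{i+1}u_{i+1}u_i)$ at each step, and rules out premature closures by checking that the candidate identifications ($C_3\B K_2$, a M\"obius-type gluing) are not partial cubes, until the ladder closes into $K_2\B C_{2n}$. You instead make a single global observation: the ``special'' edges form a perfect matching $M$ that is contained in, and hence equal to, one $\Theta$-class $F$, after which the two semicubes $\langle W_{va}\rangle$, $\langle W_{av}\rangle$ are convex, connected and $2$-regular, i.e.\ even cycles matched by $F$, and the opposite-edge relation in each 4-cycle shows $F$ matches adjacent vertices to adjacent vertices. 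The key propagation step you worry about does go through: with 4-cycles $(v,b,x,a)$ and $(v,c,y,a)$ at $v$, the second 4-cycle at $b$ cannot share the edge $bv$ with $(v,b,x,a)$ (else $bv$ would be special at $v$, contradicting uniqueness of $va$), so it shares $bx$, making $bx$ special at $b$ and $\Theta$-equivalent to $va$; connectivity and transitivity of $\Theta$ then give $M\subseteq F$, and $F=M$ because $\Theta$-classes are matchings in a triangle-free graph. Your approach buys a cleaner structure theorem (no case analysis of coincidences along the induction, and the evenness and the $n\ge 2$ bound come for free from bipartiteness and convexity of semicubes), at the cost of invoking more partial-cube machinery (convexity of $W_{uv}$, $\Theta$-classes as matchings); the paper's argument is more elementary but fiddlier. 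Both treatments leave the three-4-cycles ($Q_3$) subcase at the level of a sketch — in your setting it closes quickly because the three ``fourth vertices'' $x_{ab},x_{bc},x_{ca}$ are distinct (no $K_{2,3}$), saturate the degrees of $a,b,c$, and cubicity forces their three new common neighbours to coincide in a single antipodal vertex.
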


\begin{proof}
Let $v_0$ lie on two 4-cycles. By Claim \ref{clm:4cycle_no_intert}, they share at most an edge. On the other hand, they must share one edge since the graph is cubic. Let $v_0u_0$ be the edge they share, and let $v_{-1}$ and $v_1$ be the other two neighbors of $v_0$. Moreover, let $u_1$ be the common neighbor of $v_1$ and $u_0$, and similarly $u_{-1}$ be the common neighbor of $v_{-1}$ and $u_0$. If $v_{-1},v_0,v_1$ lie on a common 4-cycle, then, by vertex-transitivity, every vertex lies in three 4-cycles, that pairwise intersect in an edge. It is not hard to see that then $G\cong Q_3$. Thus assume  $v_{-1},v_0,v_1$ do not lie in a common 4-cycle. By vertex-transitivity, also $v_1$ lies in two 4-cycles that intersect in an edge. Since $G$ is cubic, the only possibility is that $v_1u_1$ is the shared edge and that there exist vertices $v_2,u_2$ such that $(v_1v_2u_2u_1)$ is a 4-cycle. If $v_2=v_{-1}$, then by the maximum degree limitation $u_2=u_{-1}$, and thus $G\cong C_3\B K_2$, which is not a partial cube. Also, if $v_2=u_{-1}$, then $u_2=v_{-1}$, and $G$ is not a partial cube. Thus, $v_2$ and $u_2$ are new vertices. We can use the same argument for $v_2,u_2$, as we did for $v_1,u_1$, and find vertices $v_3,u_3$ in a 4-cycle $(v_2v_3u_3u_2)$. Again, we have multiple options: $u_3=u_{-1}$ and $v_3=v_{-1}$, $u_3=v_{-1}$ and $v_3=u_{-1}$, or vertices $v_3,u_3$ are different from all before. In the first case $G\cong K_2\, \square \, C_{4}$, and in the second $G$ is not a partial cube. If the third case occurs, we continue inductively:  at some point the induction stops, therefore $G\cong K_2\, \square \, C_{2n}$, for some $n\in \N$.
\end{proof}

Notice that, if we considered also infinite graphs, a slight modification of the proof would show that the only cubic, vertex-transitive partial cubes with a vertex that lies in two 4-cycles are $K_2\, \square \, C_{2n}$, for $n\geq 2$, and $K_2 \B P_{\infty}$, where $P_{\infty}$ is the two-way infinite path.

%
%The last lemma holds for an arbitrary partial cube, but in a cubic vertex-transitive partial cube $G$, together with Lemma \ref{lem:two4cycles}, it implies that if an isometric 4-cycle and an isometric 6-cycle in $G$ share two edges, then $G\cong K_2 \B C_4 \cong Q_3$.

In \cite{jaz} it was proved that there exists no regular partial cube with degree at least three and girth more than six. For the analysis of partial cubes with girth six, a graph $X$ was introduced as shown in Figure \ref{fig:X}. It was proved, that every regular partial cube with the minimum degree at least three and girth six must have an isometric subgraph isomorphic to $X$. We shall analyze this case in the next two lemmas.

\begin{lem}\label{lem:6and4cycle}
Let $G$ be a cubic partial cube. If a 4-cycle and an isometric 6-cycle in $G$ share two edges, there is a vertex that lies in three 4-cycles. If two isometric 6-cycles in $G$ share more than an edge, then they are a part of an isometric subgraph $X$ or $G$ is a hypercube of dimension 3.
\end{lem}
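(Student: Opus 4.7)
The plan is to prove both assertions via the $\Theta$-class structure of partial cubes together with the rigidity property that for $\Theta$-related edges $uv\,\Theta\,u'v'$ with $u'\in U_{uv}$ one has $d(u,u')=d(v,v')$.

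For the first assertion, my first step is to show the two shared edges of $C$ and $D$ must share a vertex: were they the two opposite edges of $C=(v_1v_2v_3v_4)$, then the remaining edge $v_2v_3$ of $C$ would connect two vertices of $D$ lying at cycle-distance~$2$ (through an intermediate vertex of $D$), which would either create a triangle (impossible in bipartite) or contradict the isometry of~$D$. Labeling so that $C=(v_1v_2v_3v_4)$ and $D=(v_1v_2v_3x_1x_2x_3)$ share $v_1v_2$ and $v_2v_3$, I chain the opposite-edge $\Theta$-pair from $C$ with that from $D$ to get $v_4v_1\,\Theta\,v_2v_3\,\Theta\,x_2x_3$. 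Since $v_1x_3\in E(D)$, rigidity forces $d(v_4,x_2)=1$, i.e.\ $v_4x_2\in E(G)$; the parallel conclusion via $v_3v_4\,\Theta\,v_1v_2\,\Theta\,x_1x_2$ is consistent, and together these produce the three 4-cycles $(v_1v_2v_3v_4)$, $(v_1v_4x_2x_3)$ and $(v_3v_4x_2x_1)$, all through $v_4$.

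For the second assertion, I case-analyze on $E(D_1)\cap E(D_2)$, which in each 6-cycle is a disjoint union of paths. If it contains a path of length at least~$3$, then in each $D_i$ the opposite-edge $\Theta$-pair $\Theta$-relates the edge just outside the shared path to an edge inside it; doing this in both cycles forces two distinct adjacent edges of $G$ into the same $\Theta$-class, contradicting bipartiteness. If the intersection consists of two non-adjacent edges, a short subcase check (ruling out triangles and violations of the isometry of $D_1,D_2$) shows these must be opposite in each cycle, and the cubic condition at their endpoints then yields a 4-cycle sharing two adjacent edges with~$D_1$; applying the first assertion forces an additional edge, and repeating the argument on the symmetric side produces a $3$-regular subgraph on eight vertices. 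Since $G$ is cubic and connected this subgraph exhausts $G$, and a direct coordinate check identifies it with $Q_3$.

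The remaining case is when the intersection is exactly a path of length~$2$, so $D_1=(v_0v_1v_2v_3v_4v_5)$ and $D_2=(v_0v_1v_2u_3u_4u_5)$. The two $\Theta$-classes $\{v_0v_1,\,v_3v_4,\,u_3u_4\}$ and $\{v_1v_2,\,v_4v_5,\,u_4u_5\}$ each contain three edges, and combining Lemma~\ref{lem:cycles} applied within each class with the cubic degree condition and bipartiteness forces the additional edges and vertices needed to extend $D_1\cup D_2$ to an isometric copy of~$X$. The main obstacle is precisely this last case: running convex traverses simultaneously inside two distinct $\Theta$-classes and verifying that the structure forced by them matches exactly the template of~$X$, while ruling out stray shortcut edges that would introduce a 4-cycle and push the argument back into the $Q_3$ branch treated above.
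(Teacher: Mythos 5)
Your proof of the first assertion is correct and essentially the paper's argument: after reducing to the case of two shared \emph{adjacent} edges, you chain the opposite-edge $\Theta$-relations of the 4-cycle and the 6-cycle and apply the rigidity property $d(u_1,u_2)=d(v_1,v_2)$ for $\Theta$-related edges to produce the third edge at the antipodal vertex, giving three 4-cycles through it. Your handling of the degenerate subcases of the second assertion is also sound; in the opposite-edge case your explicit reconstruction of $Q_3$ (two applications of the first assertion yielding an 8-vertex cubic subgraph, namely $K_{4,4}$ minus a perfect matching) is in fact more detailed than the paper, which only appeals to transitivity of $\Theta$. (One small quibble: two distinct adjacent edges in one $\Theta$-class contradict the fact that each $\Theta$-class is a matching between $U_{uv}$ and $U_{vu}$, not bipartiteness per se.)

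The gap is in the one case that actually produces $X$: two shared consecutive edges, $D_1=(v_0v_1v_2v_3v_4v_5)$ and $D_2=(v_0v_1v_2u_3u_4u_5)$. Here you propose to run convex traverses via Lemma~\ref{lem:cycles} simultaneously inside the two three-element $\Theta$-classes, and you explicitly flag the verification as an unresolved obstacle; as written this case is a sketch, not a proof. The machinery is unnecessary: the very rigidity property you used for the first assertion closes this case in one line. From $u_3u_4\,\Theta\,v_0v_1\,\Theta\,v_3v_4$, and since $u_3$ and $v_3$ are distinct, nonadjacent (bipartiteness) and have the common neighbor $v_2$, we get $d(u_3,v_3)=2$; rigidity then forces $d(u_4,v_4)=2$, so $u_4$ and $v_4$ have a common neighbor $x$, and $D_1\cup D_2\cup\{x\}$ is precisely the copy of $X$. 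No traverse and no simultaneous analysis of two $\Theta$-classes is needed. (There remains the small check that $x$ is a new vertex: if, say, $x=v_5$, then $(v_0v_5u_4u_5)$ is a 4-cycle meeting $D_2$ in two adjacent edges, and the first assertion forces a fourth edge at $v_5$, contradicting that $G$ is cubic; but that is a routine verification, not the missing step.)
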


\begin{proof}
Let $(v_0v_1\ldots v_5)$ be  an isometric 6-cycle and assume vertices $v_0$ and $v_2$ have a common neighbor  $u_1$, different from $v_1$. Then $v_0u_1\Theta v_1v_2 \Theta v_5v_4$. Since $v_0$ and $v_5$ are adjacent, also $u_1$ and $v_4$ are, by the definition of relation $\Theta$. Thus vertex $u_1$ lies in three 4-cycles.

Let isometric 6-cycles $D^1,D^2$ share more than an edge. They cannot share three consecutive edges, by the direct consequence of the transitivity of relation $\Theta$. If they share two opposite edges, the transitivity of relation $\Theta$ implies that $G$ is a hypercube. Since $G$ is cubic, the only remaining option is that they share two consecutive edges. Let $D^1=(v_0v_1\ldots v_5)$ and $D^2=(v_0v_1v_2u_3u_4u_5)$. It holds that $u_3u_4\Theta v_1v_0 \Theta v_3v_4$.
Since $d(u_3,v_3)=2$, it holds $d(u_4,v_4)=2$, by definition of $\Theta$. Thus, there is a vertex $x$ adjacent to $u_4$ and $v_4$. We have found an isometric subgraph $X$ in $G$. 
\end{proof}

\begin{figure}[h]

\centering
\begin{tikzpicture}[style=thick,scale=1.4,label distance=0pt,node distance=10pt,rotate=-45]
\tikzstyle{vertex}=[draw, circle, inner sep=0pt, minimum size =5.5pt]
\node (0) at (0.3  ,0.3) [vertex,label={[shift={(-0.4,-0.5)}]$v_7$}] {};
\node (1) at ( 1  ,0) [vertex,label={[shift={(-0.4,-0.5)}]$v_{6}$ }] {};
\node (2) at (1.7  ,0.3) [vertex,label={[shift={(0.4,-0.5)}]$v_5$}]{};
\node (3) at ( 0  ,1) [vertex,label={[shift={(-0.4,-0.5)}]$v_8$}] {};
\node (4) at (0.9,0.9  ) [vertex,label={[shift={(-0.4,-0.5)}]$c_{1}$ }] {};
\node (5) at ( 2,1) [vertex,label={[shift={(0.4,-0.5)}]$v_{4}$ }] {};
\node (6) at (0.3,1.7) [vertex,label={[shift={(0.4,-0.4)}]$v_{1}$ }] {};
\node (7) at ( 1,2) [vertex,label={[shift={(0.4,-0.5)}]$v_2$}] {};
\node (8) at (1.7  , 1.7) [vertex,label={[shift={(0.4,-0.5)}]$v_{3}$ }] {};
\node (9) at ( 1.1  , 1.1) [vertex,label={[shift={(0.4,-0.5)}]$c_{2}$ }] {};

\draw (0) -- (1);
\draw (1) -- (2);
\draw (0) -- (3);
\draw (3) -- (4);
\draw (4) -- (5);

\draw (2) -- (5);
\draw (3) -- (6);
\draw (5) -- (8);
\draw (6) -- (7);

\draw (7) -- (8);

\draw (1) -- (9);
\draw (9) -- (7);

\end{tikzpicture}
\caption{Graph $X$}
\label{fig:X}
\end{figure}

\begin{lem}\label{lem:H}
If the graph $X$ is an isometric subgraph of a cubic, vertex-transitive partial cube $G$, then $G\cong G(10,3)$.
\end{lem}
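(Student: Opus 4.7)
The plan is to use the rigidity of $X$ as an isometric subgraph, together with cubic and vertex-transitive structure, to identify $G$ uniquely as the Desargues graph $G(10,3)$.

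First I would analyze the $\Theta$-classes induced by $X$. The graph $X$ contains exactly four isometric 6-cycles, and they are pairwise intertwining, each pair sharing two consecutive edges. Tracing $\Theta$ across these cycles partitions the twelve edges of $X$ into four $\Theta$-equivalence classes of three edges each; since $X$ is isometric in $G$, these are restrictions of genuine $\Theta$-classes of $G$. Each of the six vertices of $X$-degree two, namely $v_1,v_3,v_5,v_7,c_1,c_2$, must receive a third neighbor in the cubic graph $G$, and for each such vertex the $\Theta$-class of the missing edge is restricted to one of the two classes not yet incident there.

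Second I would rule out 4-cycles in $G$. If $G$ contains a 4-cycle, vertex-transitivity places one through every vertex of $X$. A 4-cycle through a degree-3 vertex of $X$ must use two of its three $X$-neighbors, and so shares two consecutive edges with one of the convex 6-cycles of $X$ through that vertex; Lemma~\ref{lem:6and4cycle} then forces some vertex onto three 4-cycles, and Lemma~\ref{lem:two4cycles} yields $G\cong K_2\B C_{2n}$, contradicting the assumption that $X$ (a girth-six graph with four mutually intertwining isometric 6-cycles) embeds isometrically. A parallel check for 4-cycles at the other vertices of $X$, once their third neighbors are provisionally introduced, completes this step, so $G$ has girth~6.

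With girth six in hand, vertex-transitivity forces every vertex of $G$ to lie on exactly three convex 6-cycles, matching the situation at each degree-3 vertex of $X$. Combined with Claim~\ref{clm:convex_always_intert} and the absence of 4-cycles, any two convex 6-cycles sharing an edge must intertwine along two consecutive edges, so every vertex has the same local structure as a degree-3 vertex of $X$. I would then trace the forced third neighbors of the six degree-two vertices of $X$ by extending convex traverses (Lemma~\ref{lem:cycles}) across each of the four $\Theta$-classes, each such traverse being a sequence of intertwining convex 6-cycles. A consistency check shows that exactly ten new vertices appear before the construction closes, that they organize themselves as an outer 10-cycle together with ten spokes to an inner cycle whose chord structure matches the $+3$ skip, and that the resulting graph is $G(10,3)$.

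The main obstacle I foresee is this last step: certifying that the cascade of forced extensions along the four $\Theta$-classes is globally consistent, closing up at exactly twenty vertices rather than folding earlier or requiring further extensions. Careful bookkeeping of convex traverses, coupled with vertex-transitivity forcing each newly introduced vertex to lie on exactly three convex 6-cycles with the correct intertwining pattern, should suffice to pin $G$ down to $G(10,3)$.
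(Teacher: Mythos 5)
Your opening two steps track the paper: the $\Theta$-class analysis of $X$ is essentially right, and your elimination of 4-cycles (a 4-cycle through a degree-three vertex of $X$ shares two consecutive edges with an isometric 6-cycle of $X$, so Lemmas \ref{lem:6and4cycle} and \ref{lem:two4cycles} force $G$ into the $K_2\B C_{2n}$ family, which cannot contain $X$ isometrically) is the paper's own argument. The problem is everything after that. The entire content of the lemma is the ``cascade of forced extensions'' that you defer to ``a consistency check shows'' and ``careful bookkeeping \dots should suffice'': you have correctly named the hard step, but you have not done it. The paper spends the bulk of its proof on exactly this point. It shows that a third isometric 6-cycle through $c_1$ forces a length-4 path $P'$ from $c_1$ to one of $v_1,v_3,v_5,v_7$ meeting $X$ only in its endpoints, applies Lemma \ref{lem:6and4cycle} to the resulting pair of 6-cycles sharing two consecutive edges to force a new vertex $u_5$, and then iterates --- through the intermediate graphs $X'$, $X''$, $X'''$ of Figures \ref{fig:X'X''} and \ref{fig:X'''}, with a two-case analysis at $c_2$ --- before closing the graph with a final vertex $u_1$. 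None of that forcing is supplied by your sketch, so as written the proposal does not prove the lemma.

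There is also a concrete error that would derail the consistency check even if you carried it out. You assert that vertex-transitivity forces every vertex of $G$ to lie on \emph{exactly three} convex 6-cycles, ``matching the situation at each degree-3 vertex of $X$.'' In $G(10,3)$ every vertex lies on six 6-cycles (twenty hexagons on twenty vertices), and the paper actually needs the larger count: its closing step observes that $v_5$ lies on six 6-cycles in $X'''$ and then uses vertex-transitivity to force the last vertex $u_1$ so that $v_1$ also lies on six. The count of three is only what $X$ exhibits at its degree-three vertices; it is a lower bound that grows as the subgraph is extended, and treating it as an exact invariant would terminate your construction too early. A smaller inaccuracy: you claim the third edge at each degree-two vertex of $X$ must fall into one of the four $\Theta$-classes already present in $X$. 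In fact $G(10,3)$ has five $\Theta$-classes and some of the new edges lie in the fifth, so that constraint is not available to you.
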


\begin{proof}
Assume $X$ is an isometric subgraph of $G$, and denote its vertices as in Figure \ref{fig:X}. Firstly, suppose that there is a 4-cycle in $G$. By vertex-transitivity, $v_2$ must be incident with a 4-cycle. But then an isometric 6-cycle and a 4-cycle must share two edges. Lemmas \ref{lem:two4cycles} and \ref{lem:6and4cycle} imply that $G=C_{4}\B K_2$, but then $G$ does not have an isometric subgraph isomorphic to $X$. Thus there are no 4-cycles in $G$.

Notice that $v_2$ in $X$ lies in three isometric 6-cycles. Since $G$ is vertex-transitive, also $c_1$ must lie in at least three isometric 6-cycles. Since $X$ is an isometric subgraph in a cubic graph without 4-cycles, and no two 6-cycles intersect in three edges, there must be a path $P'$ of length 4, connecting $c_1$ with one of the vertices $v_1,v_3,v_5,v_7$. Moreover, $P'$ can intersect $X$ only in its endpoints. Without loss of generality, assume    $P'$ connects $c_1$ and $v_7$ and denote the vertices of $P'$ by $c_1,s_1,u_6,u_7,v_7$, respecting the order in $P'$. Consider the isometric cycle  $D_1=(c_1s_1u_6u_7v_7v_8)$ and the isometric cycle  $D_2=(c_1v_4v_5v_6v_7v_8)$. Cycles $D_1$ and $D_2$ intertwine in two edges, thus, by Lemma \ref{lem:6and4cycle}, there is a vertex $u_5$ adjacent to $v_5$ and $u_6$. By the isometry of $X$, $u_5$ is distinct from all vertices of $X$. Call $X'$ the graph induced by $V(X)$ and $s_1,u_5,u_6,u_7$. Since there is no 4-cycle in $G$ and $X$ is an isometric subgraph, $X'$ is as represented in Figure \ref{fig:X'X''}a.

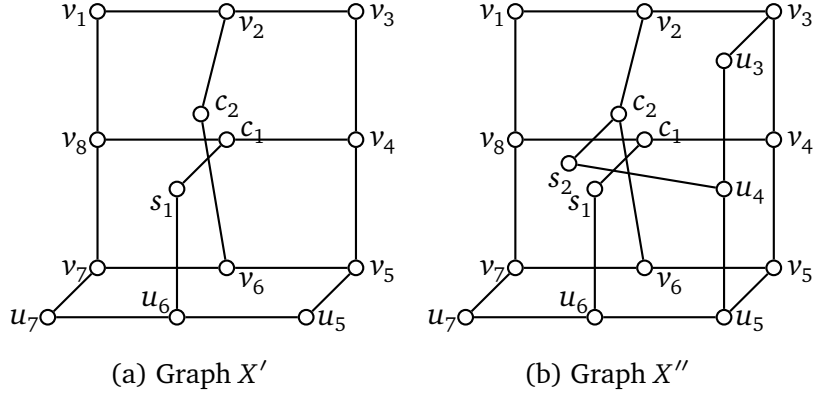
\begin{figure}[h]

\centering
\begin{subfigure}[b]{0.3\textwidth}

\begin{tikzpicture}[style=thick,scale=1.7]
\tikzstyle{vertex}=[draw, circle, inner sep=0pt, minimum size=5.5pt]

\node (0) at (0  ,0,0) [vertex,label={[shift={(-0.3,-0.45)}]$v_{7}$ }] {};
\node (1) at ( 1  ,0,0) [vertex,label={[shift={(0.33,-0.61)}]$v_{6}$ }] {};
\node (2) at (2  ,0,0) [vertex,label={[shift={(0.35,-0.45)}]$v_{5}$ }] {};
\node (3) at ( 0  ,1,0) [vertex,label={[shift={(-0.3,-0.45)}]$v_{8}$ }] {};
\node (4) at (1,1 ,0 ) [vertex,label={[shift={(0.35,-0.3)}]$c_{1}$ }] {};
\node (5) at ( 2,1,0) [vertex,label={[shift={(0.35,-0.45)}]$v_{4}$ }] {};
\node (6) at (0,2,0) [vertex,label={[shift={(-0.3,-0.45)}]$v_{1}$ }] {};
\node (7) at ( 1,2,0) [vertex,label={[shift={(0.33,-0.61)}]$v_{2}$ }] {};
\node (8) at (2  , 2,0) [vertex,label={[shift={(0.35,-0.45)}]$v_{3}$ }] {};
\node (9) at ( 0.8  , 1.2,0) [vertex,label={[shift={(0.35,-0.3)}]$c_{2}$ }] {};

\node (10) at (1,1,1  ) [vertex,label={[shift={(-0.18,-0.63)}]$s_{1}$ }] {};
\node (11) at ( 1  ,0,1) [vertex,label={[shift={(-0.27,-0.2)}]$u_{6}$ }] {};
\node (12) at ( 0,0,1) [vertex,label={[shift={(-0.3,-0.45)}]$u_{7}$ }] {};
\node (13) at (2  ,0,1) [vertex,label={[shift={(0.35,-0.45)}]$u_{5}$ }] {};

\draw (10) -- (4);
\draw (10) -- (11);
\draw (11) -- (12);
\draw (11) -- (13);
\draw (0) -- (12);
\draw (2) -- (13);

\draw (0) -- (1);
\draw (1) -- (2);
\draw (0) -- (3);
\draw (3) -- (4);
\draw (4) -- (5);

\draw (2) -- (5);
\draw (3) -- (6);
\draw (5) -- (8);
\draw (6) -- (7);

\draw (7) -- (8);

\draw (1) -- (9);
\draw (9) -- (7);

\end{tikzpicture}
\caption{Graph $X'$}
\end{subfigure}
%\subfigure[Thes claw: $K_{1,3}$]
%{
\quad
\begin{subfigure}[b]{0.3\textwidth}

\begin{tikzpicture}[style=thick,scale=1.7]
\tikzstyle{vertex}=[draw, circle, inner sep=0pt, minimum size=5.5pt]

\node (0) at (0  ,0,0) [vertex,label={[shift={(-0.3,-0.45)}]$v_{7}$ }] {};
\node (1) at ( 1  ,0,0) [vertex,label={[shift={(0.33,-0.61)}]$v_{6}$ }] {};
\node (2) at (2  ,0,0) [vertex,label={[shift={(0.35,-0.45)}]$v_{5}$ }] {};
\node (3) at ( 0  ,1,0) [vertex,label={[shift={(-0.3,-0.45)}]$v_{8}$ }] {};
\node (4) at (1,1 ,0 ) [vertex,label={[shift={(0.35,-0.3)}]$c_{1}$ }] {};
\node (5) at ( 2,1,0) [vertex,label={[shift={(0.35,-0.45)}]$v_{4}$ }] {};
\node (6) at (0,2,0) [vertex,label={[shift={(-0.3,-0.45)}]$v_{1}$ }] {};
\node (7) at ( 1,2,0) [vertex,label={[shift={(0.33,-0.61)}]$v_{2}$ }] {};
\node (8) at (2  , 2,0) [vertex,label={[shift={(0.35,-0.45)}]$v_{3}$ }] {};
\node (9) at ( 0.8  , 1.2,0) [vertex,label={[shift={(0.35,-0.3)}]$c_{2}$ }] {};

\node (10) at (1,1,1  ) [vertex,label={[shift={(-0.18,-0.63)}]$s_{1}$ }] {};
\node (11) at ( 1  ,0,1) [vertex,label={[shift={(-0.27,-0.2)}]$u_{6}$ }] {};
\node (12) at ( 0,0,1) [vertex,label={[shift={(-0.3,-0.45)}]$u_{7}$ }] {};
\node (13) at (2  ,0,1) [vertex,label={[shift={(0.35,-0.45)}]$u_{5}$ }] {};

\node (14) at ( 0.8 , 1.2,1) [vertex,label={[shift={(-0.1,-0.67)}]$s_{2}$ }] {};
\node (15) at (2  , 2,1) [vertex,label={[shift={(0.35,-0.45)}]$u_{3}$ }] {};
\node (16) at ( 2,1,1) [vertex,label={[shift={(0.35,-0.45)}]$u_{4}$ }] {};

\draw (9) -- (14);
\draw (14) -- (16);
\draw (16) -- (15);
\draw (15) -- (8);
\draw (16) -- (13);

\draw (10) -- (4);
\draw (10) -- (11);
\draw (11) -- (12);
\draw (11) -- (13);
\draw (0) -- (12);
\draw (2) -- (13);

\draw (0) -- (1);
\draw (1) -- (2);
\draw (0) -- (3);
\draw (3) -- (4);
\draw (4) -- (5);

\draw (2) -- (5);
\draw (3) -- (6);
\draw (5) -- (8);
\draw (6) -- (7);

\draw (7) -- (8);

\draw (1) -- (9);
\draw (9) -- (7);

\end{tikzpicture}
\caption{Graph $X''$}
\end{subfigure}

\caption{Induced subgraphs}
\label{fig:X'X''}
\end{figure}

Vertex $c_2$ must lie in at least three isometric 6-cycles. As before, the fact that $X$ is isometric, that the maximal degree of $G$ is three, and the absence of 4-cycles imply that there  exists a path $P''$ of length 4, connecting $c_2$ and one of $v_1,v_3,v_5,v_7$. By symmetry we can limit ourselves to two possibilities. 

If $P''$ connects $c_2$ and $v_5$, then $P''$ must be on $c_2, s_2,u_4,u_5,v_5$, where $s_2$ and $u_4$ are some two vertices in $G$ different from the vertices in $V(X')$ (since $X'$ is an induced graph). Then the cycle $D_3=(c_2 s_2u_4u_5v_5v_6)$ and the cycle $D_4=(c_2v_2v_3v_4v_5v_6)$ meet in edges $v_5v_6$ and $v_6c_2$. By Lemma \ref{lem:6and4cycle}, there must exist a vertex $u_3$ connecting $v_3$ and $u_4$. Since $X'$ is an induced subgraph, $u_3$ is distinct from all the vertices of $X'$. Denote  the graph induced on $V(X')$  and $s_2,u_3,u_4$ by $X''$. Again, since $G$ has no 4-cycles and $X$ is an isometric subgraph, $X''$ is as in Figure \ref{fig:X'X''}b.

On the other hand, $P''$ can connect $c_2$ and $v_3$. For the same reasons as before, there must exist vertices $s_2,u_4,u_3$, different from the vertices of $X'$ such that $P''$ lies on $c_2,s_2,u_4,u_3,v_3$. Then the cycle $D_4$ from above and the cycle $D_5=(c_2s_2u_4u_3v_3v_2)$ share edges $c_2v_2$ and $v_2v_3$. Lemma \ref{lem:6and4cycle} implies that there must be a common neighbor of $v_5$ and $u_4$. This can only be $u_5$, thus we again have $X''$ as an induced subgraph.

\begin{figure}[h]
\centering
\begin{subfigure}[b]{0.3\textwidth}
\begin{tikzpicture}[style=thick,scale=1.7]
\tikzstyle{vertex}=[draw, circle, inner sep=0pt, minimum size=5.5pt]

\node (0) at (0  ,0,0) [vertex,label={[shift={(-0.3,-0.45)}]$v_{7}$ }] {};
\node (1) at ( 1  ,0,0) [vertex,label={[shift={(0.33,-0.61)}]$v_{6}$ }] {};
\node (2) at (2  ,0,0) [vertex,label={[shift={(0.35,-0.45)}]$v_{5}$ }] {};
\node (3) at ( 0  ,1,0) [vertex,label={[shift={(-0.3,-0.45)}]$v_{8}$ }] {};
\node (4) at (1,1 ,0 ) [vertex,label={[shift={(0.35,-0.3)}]$c_{1}$ }] {};
\node (5) at ( 2,1,0) [vertex,label={[shift={(0.35,-0.45)}]$v_{4}$ }] {};
\node (6) at (0,2,0) [vertex,label={[shift={(-0.3,-0.45)}]$v_{1}$ }] {};
\node (7) at ( 1,2,0) [vertex,label={[shift={(0.33,-0.61)}]$v_{2}$ }] {};
\node (8) at (2  , 2,0) [vertex,label={[shift={(0.35,-0.45)}]$v_{3}$ }] {};
\node (9) at ( 0.8  , 1.2,0) [vertex,label={[shift={(0.35,-0.3)}]$c_{2}$ }] {};

\node (10) at (1,1,1  ) [vertex,label={[shift={(-0.18,-0.63)}]$s_{1}$ }] {};
\node (11) at ( 1  ,0,1) [vertex,label={[shift={(-0.27,-0.2)}]$u_{6}$ }] {};
\node (12) at ( 0,0,1) [vertex,label={[shift={(-0.3,-0.45)}]$u_{7}$ }] {};
\node (13) at (2  ,0,1) [vertex,label={[shift={(0.35,-0.45)}]$u_{5}$ }] {};

\node (14) at ( 0.8 , 1.2,1) [vertex,label={[shift={(-0.1,-0.67)}]$s_{2}$ }] {};
\node (15) at (2  , 2,1) [vertex,label={[shift={(0.35,-0.45)}]$u_{3}$ }] {};
\node (16) at ( 2,1,1) [vertex,label={[shift={(0.35,-0.45)}]$u_{4}$ }] {};

\node (17) at ( 1,2,1) [vertex,label={[shift={(-0.27,-0.2)}]$u_{2}$ }] {};
\node (18) at ( 0  ,1,1) [vertex,label={[shift={(-0.3,-0.45)}]$u_{8}$ }] {};

\draw (17) -- (10);
\draw (17) -- (15);
\draw (18) -- (14);
\draw (18) -- (12);

\draw (9) -- (14);
\draw (14) -- (16);
\draw (16) -- (15);
\draw (15) -- (8);
\draw (16) -- (13);

\draw (10) -- (4);
\draw (10) -- (11);
\draw (11) -- (12);
\draw (11) -- (13);
\draw (0) -- (12);
\draw (2) -- (13);

\draw (0) -- (1);
\draw (1) -- (2);
\draw (0) -- (3);
\draw (3) -- (4);
\draw (4) -- (5);

\draw (2) -- (5);
\draw (3) -- (6);
\draw (5) -- (8);
\draw (6) -- (7);

\draw (7) -- (8);

\draw (1) -- (9);
\draw (9) -- (7);

\end{tikzpicture}
\caption{Graph $X'''$}
\end{subfigure}
\qquad
\begin{subfigure}[b]{0.3\textwidth}
\begin{tikzpicture}[style=thick,scale=1.7]
\tikzstyle{vertex}=[draw, circle, inner sep=0pt, minimum size=5.5pt]

\node (0) at (0  ,0,0) [vertex,label={[shift={(-0.3,-0.45)}]$v_{7}$ }] {};
\node (1) at ( 1  ,0,0) [vertex,label={[shift={(0.33,-0.61)}]$v_{6}$ }] {};
\node (2) at (2  ,0,0) [vertex,label={[shift={(0.35,-0.45)}]$v_{5}$ }] {};
\node (3) at ( 0  ,1,0) [vertex,label={[shift={(-0.3,-0.45)}]$v_{8}$ }] {};
\node (4) at (1,1 ,0 ) [vertex,label={[shift={(0.35,-0.3)}]$c_{1}$ }] {};
\node (5) at ( 2,1,0) [vertex,label={[shift={(0.35,-0.45)}]$v_{4}$ }] {};
\node (6) at (0,2,0) [vertex,label={[shift={(-0.3,-0.45)}]$v_{1}$ }] {};
\node (7) at ( 1,2,0) [vertex,label={[shift={(0.33,-0.61)}]$v_{2}$ }] {};
\node (8) at (2  , 2,0) [vertex,label={[shift={(0.35,-0.45)}]$v_{3}$ }] {};
\node (9) at ( 0.8  , 1.2,0) [vertex,label={[shift={(0.35,-0.3)}]$c_{2}$ }] {};

\node (10) at (1,1,1  ) [vertex,label={[shift={(-0.18,-0.63)}]$s_{1}$ }] {};
\node (11) at ( 1  ,0,1) [vertex,label={[shift={(-0.27,-0.2)}]$u_{6}$ }] {};
\node (12) at ( 0,0,1) [vertex,label={[shift={(-0.3,-0.45)}]$u_{7}$ }] {};
\node (13) at (2  ,0,1) [vertex,label={[shift={(0.35,-0.45)}]$u_{5}$ }] {};

\node (14) at ( 0.8 , 1.2,1) [vertex,label={[shift={(-0.1,-0.67)}]$s_{2}$ }] {};
\node (15) at (2  , 2,1) [vertex,label={[shift={(0.35,-0.45)}]$u_{3}$ }] {};
\node (16) at ( 2,1,1) [vertex,label={[shift={(0.35,-0.45)}]$u_{4}$ }] {};

\node (17) at ( 1,2,1) [vertex,label={[shift={(-0.27,-0.2)}]$u_{2}$ }] {};
\node (18) at ( 0  ,1,1) [vertex,label={[shift={(-0.3,-0.45)}]$u_{8}$ }] {};

\node (19) at ( 0  ,2,1) [vertex,label={[shift={(-0.3,-0.45)}]$u_{1}$ }] {};

\draw (19) -- (6);
\draw (18) -- (19);
\draw (17) -- (19);

\draw (17) -- (10);
\draw (17) -- (15);
\draw (18) -- (14);
\draw (18) -- (12);

\draw (9) -- (14);
\draw (14) -- (16);
\draw (16) -- (15);
\draw (15) -- (8);
\draw (16) -- (13);

\draw (10) -- (4);
\draw (10) -- (11);
\draw (11) -- (12);
\draw (11) -- (13);
\draw (0) -- (12);
\draw (2) -- (13);

\draw (0) -- (1);
\draw (1) -- (2);
\draw (0) -- (3);
\draw (3) -- (4);
\draw (4) -- (5);

\draw (2) -- (5);
\draw (3) -- (6);
\draw (5) -- (8);
\draw (6) -- (7);

\draw (7) -- (8);

\draw (1) -- (9);
\draw (9) -- (7);

\end{tikzpicture}
\caption{Graph $G(10,3)$}
\end{subfigure}
\caption{Induced subgraphs}
\label{fig:X'''}
\end{figure}
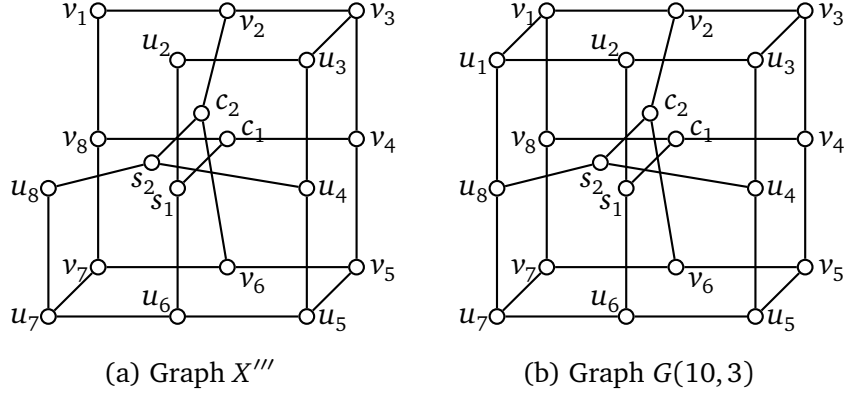

We continue in the same fashion. Cycle $D_3$ and the cycle  $(v_5v_6v_7u_7u_6u_5)$ share $v_6v_5$ and $v_5u_5$, thus there must be a vertex $u_8$ connecting $s_2$ and $u_7$. Similarly, the cycle $(c_1s_1u_6u_5v_5v_4)$ and the cycle $(v_3v_4v_5u_5u_4u_3)$ share $v_4v_5$ and $v_5u_5$ thus there must be a vertex $u_2$ connecting $s_1$ and $u_3$. Let $X'''$ be the subgraph induced   on vertices $V(X'')$ and $u_8,u_2$. Since $X$ is an isometric subgraph and $G$ is without 4-cycles, $X'''$ is as in Figure \ref{fig:X'''}a.

Notice that  the only vertices in $X'''$ that do not have degree 3 are $v_1,u_2,u_8$. Also, observe that $v_5$ lies in six 6-cycles. The only option that $v_1$ lies in six 6-cycles is that there exists $u_1$ connected to  $v_1,u_2$ and $u_8$. It can be checked directly that the obtained graph (shown in Figure \ref{fig:X'''}b) is isomorphic to $G(10,3)$.
\end{proof}

%\begin{cor}
%If 
%\end{cor}

It is a well-known fact, that the edge-connectivity of a vertex-transitive graph equals the degree of its vertices \cite{godsil2001algebraic}. In a  cubic, vertex-transitive partial cubes this implies that $|F_{ab}|\geq 3$ for every edge $ab\in E(G)$.

\begin{lem}\label{lem:allincident}
In a cubic, vertex-transitive partial cube $G$, every pair of incident edges lies in a convex cycle.
\end{lem}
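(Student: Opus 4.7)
The plan is to apply Lemma \ref{lem:cycles} to a chosen pair of $\Theta$-equivalent edges, and to handle the residual cases via the earlier structural lemmas. Let $z$ denote the third neighbor of $u$; since $G$ is bipartite, $d(v, w) = 2$.

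The main idea is to inspect the $\Theta$-class $F_{uw}$, which has size at least $3$. If some edge $vv' \in F_{uw}$ is incident to $v$---equivalently if $v$ and $w$ share a common neighbor other than $u$---then Lemma \ref{lem:cycles} provides a convex traverse $D_1,\dots,D_n$ from $uw$ to $vv'$, whose $u$-side is a geodesic from $u$ to $v$ of length $1$. Each isometric cycle $D_i$ of length $2k_i$ contributes $k_i - 1 \geq 1$ to the $u$-side, since the two $F_{uw}$-edges of $D_i$ are antipodal on $D_i$ and the $u$-side arc between their $u$-ends has length $k_i - 1$. Hence $\sum_i (k_i - 1) = 1$ forces $n = 1$ and $k_1 = 2$, so $D_1$ is the convex $4$-cycle $u\,v\,v'\,w$, containing both $uv$ and $uw$.

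The main obstacle is the complementary case, in which $u$ is the unique common neighbor of $v$ and $w$, so no $4$-cycle passes through $uv$ and $uw$. I would handle this by further case analysis. Applying the analogous argument at $z$ (via $F_{uw}$) and at $w$ (via $F_{uv}$): if it yields $4$-cycles through two different pairs of edges at $u$, then $u$ lies on two $4$-cycles, and Lemma \ref{lem:two4cycles} forces $G \cong K_2 \B C_{2n}$, in which every pair of incident edges lies on a convex cycle. Otherwise $u$ lies on no $4$-cycle, and by vertex-transitivity $G$ has girth $6$. In this situation I would extract, from Lemma \ref{lem:cycles} together with $|F_{uw}| \geq 3$ and the fact that each isometric cycle contains only two edges of $F_{uw}$, two distinct convex cycles $C$ and $C'$ through $uw$; under the bad-case assumption both use $uz$ at $u$ and so share the $2$-path $w\,u\,z$. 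By Claim \ref{clm:convex_always_intert} they intertwine, and taking them to be $6$-cycles (the shortest in girth $6$), Lemma \ref{lem:6and4cycle} embeds an isometric copy of the graph $X$ (with the alternative $G \cong Q_3$ ruled out by girth $6$). Lemma \ref{lem:H} then yields $G \cong G(10, 3)$, in which direct inspection shows every pair of incident edges lies on a convex hexagon, contradicting the bad-case assumption.

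The crux is the length-count for convex traverses that collapses the traverse to a single $4$-cycle in the first case. The principal difficulty is the girth-$6$ bad case, dispatched via intertwining convex $6$-cycles and the classification chain Lemma \ref{lem:6and4cycle} $\to$ Lemma \ref{lem:H}.
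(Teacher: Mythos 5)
Your first step is sound: the length count $\sum_i(k_i-1)=d(u,v)=1$ correctly collapses the convex traverse from $uw$ to an $F_{uw}$-edge at $v$ into a single convex $4$-cycle through $uv$ and $uw$. The problem is the case analysis that follows. From "the pair $\{uv,uw\}$ lies on no $4$-cycle and $u$ does not lie on two $4$-cycles" you conclude that $u$ lies on \emph{no} $4$-cycle and hence that $G$ has girth $6$. That inference is false: the remaining possibility is that $u$ lies on exactly one $4$-cycle, which contains a different pair of edges at $u$. This is not a fringe case --- it is the generic one. In the cubic permutahedron, the truncated cuboctahedron, and the truncated icosidodecahedron every vertex lies on exactly one $4$-cycle, and two of the three pairs of incident edges at each vertex lie on no $4$-cycle at all. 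Your proposal produces no convex cycle through those pairs: the traverse-collapse argument only ever yields $4$-cycles, and your girth-$6$ branch never activates because the girth is $4$. So the lemma is unproved precisely for the three sporadic graphs for which it is later needed.

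The paper closes this case with an idea your proposal is missing: start from the $4$-cycle $D=(v_0v_1v_2v_3)$ that \emph{does} exist at the vertex, pick a third edge $ab\in F_{v_0v_1}$ (available since vertex-transitivity gives edge-connectivity $3$, so $|F_{v_0v_1}|\geq 3$), and take the \emph{first} convex cycle $D'$ of a convex traverse from $v_0v_1$ to $ab$, arranged so that $D'\neq D$. Claim \ref{clm:4cycle_no_intert} says $D'$ shares at most one edge with $D$; since $D'$ contains $v_0v_1$, at $v_1$ it cannot continue along $v_1v_2$ and is forced onto the third edge $v_1t$, covering the pair $\{v_0v_1,v_1t\}$; the symmetric argument covers $\{v_1v_2,v_1t\}$. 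You would need to add this step (or an equivalent) to your "bad case". Separately, your girth-$6$ branch is also shakier than it needs to be: the two convex cycles through $uw$ extracted from $|F_{uw}|\geq 3$ need not be $6$-cycles, so you cannot simply "take them to be $6$-cycles" before invoking Lemma \ref{lem:6and4cycle}; the paper instead cites directly the result from \cite{jaz} that a cubic partial cube of girth at least $6$ contains an isometric $X$, whence Lemma \ref{lem:H} gives $G\cong G(10,3)$.
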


\begin{proof}
If the girth of $G$ is more than 4, then $G$ has an isometric subgraph isomorphic to $X$ and, by Lemma \ref{lem:H}, $G$ is isomorphic to $G(10,3)$. The assertion holds in this graph.

Now assume that the girth of $G$ is 4. Let $D=(v_0v_1v_2v_3)$ be a 4-cycle in $G$. Let $ab$ be an edge in $F_{v_0v_1}$, distinct from $v_0v_1$ and $v_2v_3$. By Lemma \ref{lem:cycles}, there exists a convex traverse from $v_0v_1$ to $ab$. Let $D'$ be the first convex cycle on this traverse. Without loss of generality assume $D'\neq D$ (if otherwise take the second cycle on the traverse and exchange edge $v_0v_1$ with $v_2v_3$). Similarly, without loss of generality $v_1v_2$ lies in a convex cycle $D''$, different from $D$. Since convex cycles $D'$ and $D''$ each share at most an edge with $D$, by Claim \ref{clm:4cycle_no_intert}, all the pairs of edges incident with $v_1$ lie in some convex cycle. By transitivity, this holds for all the vertices.
\end{proof}

Let $u$ be an arbitrary vertex of a cubic, vertex-transitive partial cube $G$, and let $u_1,u_2,u_3$ be its neighbors. Let $g_1(G)$ be the length of a shortest convex cycle on $u_1,u,u_2$, let $g_2(G)$ be the length of a shortest convex cycle on $u_2,u,u_3$, and let $g_3(G)$ be the length of a shortest convex cycle on $u_3,u,u_1$. Without loss of generality assume $g_1(G) \leq g_2(G) \leq g_3(G)$. Clearly, for a vertex-transitive partial cube functions $g_1,g_2,g_3$ are independent of the choice of vertex $u$. Lemmas \ref{lem:two4cycles} and \ref{lem:H}, together with the fact that a cubic partial cube with girth at least 6 includes an isometric subgraph $X$ \cite{jaz}, immediately  give the following corollary. 

\begin{cor}\label{cor:basic}
Let $G$ be a cubic, vertex-transitive partial cube. Then $g_1(G)\leq 6$. If $g_1(G)=6$, then $G\cong G(10,3)$, while if $g_1(G)=g_2(G)=4$, then $G\cong K_2\, \square \, C_{2n}$, for some $n\geq 2$.
\end{cor}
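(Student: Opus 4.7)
The plan is to dispatch each of the three claims of the corollary by reducing it to one of Lemmas~\ref{lem:two4cycles}, \ref{lem:H}, \ref{lem:allincident}, together with the cited result from \cite{jaz} that a cubic partial cube has girth at most $6$ and, if its girth equals $6$, contains $X$ as an isometric subgraph. Since partial cubes are bipartite and $g_1,g_2,g_3$ are finite by Lemma~\ref{lem:allincident}, the girth of $G$ is either $4$ or $6$.

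For the bound $g_1(G)\leq 6$ I would case-split on the girth. If the girth is $4$, pick any $4$-cycle and let $u$ be one of its vertices; the two edges of this $4$-cycle at $u$ lie on a convex cycle of length $4$ (four-cycles are automatically convex by Claim~\ref{clm:4cycle_no_intert}), so the shortest-convex-cycle-length through that incident-edge pair is $4$ at $u$, and vertex-transitivity promotes this to $g_1(G)=4$. If the girth is $6$, the result of \cite{jaz} supplies an isometric copy of $X$ in $G$ and Lemma~\ref{lem:H} yields $G\cong G(10,3)$, in which one checks directly that $g_1=6$. For the implication $g_1(G)=6\Rightarrow G\cong G(10,3)$ I would argue that no $4$-cycle can exist in $G$: any $4$-cycle would supply a convex $4$-cycle through two incident edges at each of its vertices, which would force $g_1\leq 4$. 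Hence the girth is at least $6$, and the same appeal to \cite{jaz} and Lemma~\ref{lem:H} closes the case.

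For the final assertion, $g_1(G)=g_2(G)=4$, fix a vertex $u$ with neighbors $u_1,u_2,u_3$. Two of the three pairs of incident edges at $u$ --- say $\{uu_1,uu_2\}$ and $\{uu_2,uu_3\}$ --- each lie on a convex $4$-cycle, and these two $4$-cycles must be distinct, since no single $4$-cycle through $u$ can contain all three of the edges $uu_1,uu_2,uu_3$. So $u$ lies on two distinct $4$-cycles and Lemma~\ref{lem:two4cycles} finishes the proof by giving $G\cong K_2\B C_{2n}$ for some $n\geq 2$. I expect no significant obstacle: the whole corollary is essentially a bookkeeping consequence of the lemmas already established, the only points requiring a little care being the well-definedness of $g_1,g_2,g_3$ (provided by Lemma~\ref{lem:allincident}) and the ``no $4$-cycle'' step under the hypothesis $g_1=6$.
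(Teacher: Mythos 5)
Your proposal is correct and follows essentially the same route as the paper, which derives the corollary directly from Lemmas~\ref{lem:two4cycles} and \ref{lem:H} together with the girth facts from \cite{jaz} (girth at most six, and an isometric copy of $X$ when the girth is six); your extra bookkeeping (well-definedness of $g_1,g_2,g_3$ via Lemma~\ref{lem:allincident}, distinctness of the two $4$-cycles, and the ``no $4$-cycle when $g_1=6$'' step) just makes explicit what the paper leaves as immediate.
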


We cover another important case in the next lemma and obtain a well known cubic, vertex-transitive partial cube \cite{gedeonova1990constructions}.

\begin{lem}\label{lem:6-cycle}
Let $G$ be a cubic, vertex-transitive partial cube with $g_1(G)=4,g_2(G)=g_3(G)=6$. Then $G$ is isomorphic to the cubic permutahedron.
\end{lem}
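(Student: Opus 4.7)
The plan is a local-to-global combinatorial build-up analogous to the proofs of Lemmas \ref{lem:two4cycles} and \ref{lem:H}, exploiting the strong cycle constraints of the hypothesis.

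First, I would pin down the edge structure. By Claim \ref{clm:4cycle_no_intert} combined with Lemma \ref{lem:two4cycles}, a vertex on two 4-cycles forces $g_2(G)=4$, contradicting $g_2(G)=6$. Hence each vertex lies on exactly one 4-cycle, the 4-cycles partition $V(G)$ into vertex-disjoint quadruples, and $E(G)$ splits into the \emph{square edges} (those inside some 4-cycle) and the \emph{link edges} (the rest), with the link edges forming a perfect matching.

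Second, I would show that every convex 6-cycle $C$ alternates square and link edges around its boundary. At a vertex $u\in V(C)$ the two incident edges of $C$ form a ``6-cycle pair'' in the classification determined by $g_1,g_2,g_3$, so exactly one of them is a square edge and the other a link edge: two consecutive square edges would force $C$ and the unique 4-cycle at $u$ to share a length-two path (forbidden by Claim \ref{clm:4cycle_no_intert}), while two consecutive link edges would contradict the matching property. Thus $C$ has three square and three link edges alternating, and its three square edges lie in three distinct 4-cycles.

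Third, I would build $G$ outward from a fixed 4-cycle $D=(v_0v_1v_2v_3)$ with link neighbors $a_0,a_1,a_2,a_3$. By Lemma \ref{lem:allincident} and the hypothesis, each pair $\{v_{i-1}v_i,\,v_ia_i\}$ lies in a convex 6-cycle; by the alternation above, this 6-cycle has the form $(v_{i-1}\,v_i\,a_i\,x_i\,y_{i-1}\,a_{i-1})$, where $a_ix_i$ and $y_{i-1}a_{i-1}$ are square edges and $x_iy_{i-1}$ is a link edge. Using the traverse lemma (Lemma \ref{lem:cycles}) together with Claims \ref{clm:convex_always_intert}--\ref{clm:4cycle_no_intert} and the bound $|F_{ab}|\ge 3$ noted just before Lemma \ref{lem:allincident}, I would argue that the link neighbors $a_0,a_1,a_2,a_3$ lie in four distinct 4-cycles attached to $D$ in a rigid pattern, which by vertex-transitivity matches the local picture at every further vertex. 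Iterating this extension, finiteness forces $G$ to close on a graph of $24$ vertices whose 4-cycle quotient is the octahedron; the result is the truncated octahedron, i.e.\ the cubic permutahedron $\mathrm{Cay}(S_4,\{(1\,2),(2\,3),(3\,4)\})$.

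The main obstacle will be the uniqueness and closure in the third step: one must rule out ``early identifications'' where several of the 6-cycles emanating from $D$ close onto fewer than four distinct neighbor 4-cycles, or where the iterative extension glues back on itself in a way different from the permutahedron pattern. Each such identification must be shown to produce either two 4-cycles sharing more than an edge (contradicting Claim \ref{clm:4cycle_no_intert}), a pair of convex cycles sharing more than an edge without intertwining (contradicting Claim \ref{clm:convex_always_intert}), or a $K_2\B C_{2n}$ substructure (contradicting $g_2(G)=6$ via Lemma \ref{lem:two4cycles}). Once branching is excluded, the inductive extension is determined uniquely and $G$ is forced to be the cubic permutahedron.
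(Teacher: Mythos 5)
Your first two steps are correct, and they are essentially the same observations the paper needs: since $g_2(G)=6$, Lemma \ref{lem:two4cycles} forces every vertex onto exactly one 4-cycle (and, via Claims \ref{clm:convex_always_intert} and \ref{clm:4cycle_no_intert} together with Lemmas \ref{lem:6and4cycle} and \ref{lem:H}, onto exactly two convex 6-cycles), so the 4-cycles are vertex-disjoint, the remaining edges form a perfect matching, and every convex 6-cycle alternates square and link edges. The gap is in your third step, which is where the lemma actually gets proved. You assert that the four neighbouring 4-cycles of $D$ are distinct and attach ``in a rigid pattern,'' but the alternation argument only shows that \emph{consecutive} ones (those containing $a_0$ and $a_1$, say) are distinct; excluding identifications such as $a_0$ and $a_2$ lying on a common 4-cycle is precisely one of the case analyses you defer. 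More seriously, even granting full local rigidity, ``iterating this extension, finiteness forces $G$ to close on a graph of $24$ vertices'' is not an argument: your local data describes a $4$-valent triangulated-surface structure on the quotient by 4-cycles, and nothing purely local prevents it from closing up on a larger surface. Some global input is required, and the proposal never supplies it.

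The paper supplies exactly that global input and little else: using your steps 1--2 in the form ``any two of these cycles share at most an edge,'' it pastes a disc on every 4-cycle and every convex 6-cycle to obtain a closed surface, and the counts $3n=2e$, $4f_4=n$, $6f_6=2n$ give $\chi=n/12>0$, hence $n\le 24$; it then invokes the known census of cubic partial cubes on at most $32$ vertices, among which the only vertex-transitive candidates on $12$ or $24$ vertices are $K_2\B C_6$, $K_2\B C_{12}$ and the cubic permutahedron, the prisms being excluded by $g_2(G)=6$. If you wish to avoid citing the census, you must still carry out the Euler computation (or an equivalent global argument) to bound the order, and only then can your local build-up be completed by finite case analysis; as written, the closure step is missing.
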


\begin{proof}
We paste a disc on every convex 6-cycle and every 4-cycle. By Claim \ref{clm:4cycle_no_intert}, Lemma \ref{lem:6and4cycle} and the fact that $G$ is not isomorphic to  $G(10,3)$ or $ K_2\, \square \, C_{2n}$, two 4-cycles, two convex 6-cycles, or a 4-cycle and a 6-cycle share at most an edge. Thus we obtain a closed surface.  

Denote by $f_6$ the number of convex 6-cycles, by $f_4$ the number of 4-cycles, by $f$ the number of faces of the embedding of $G$, by $n$ and $e$ the number of vertices and edges of $G$, and by $\chi$ the Euler characteristic of the surface. We have
$$3n=2e, \quad f_4+f_6=f, \quad  4f_4=6f_6/2=n, \textrm{ and } n-e+f=\chi.$$
From the second and third equation we get $n(\frac{1}{4}+\frac{2}{6})=f$. If we use the latter combined with the first equation in the Euler formula, we get:
$$\chi=n\left(1-\frac{3}{2}+\frac{1}{4}+\frac{2}{6}\right)=\frac{n}{12}.$$
Since the right hand side of the equation is positive, it holds that $\chi>0$, i.e.,~$\chi=1$ or $\chi=2$. In both cases $n\leq 24$. Cubic partial cubes up to 32 vertices are known, the only vertex-transitive on 24 vertices are the cubic permutahedron and $K_2\B C_{12}$, while $K_2\B C_6$ is the only one on 12 vertices. Thus $G$ must be isomorphic to the cubic permutahedron.
\end{proof}

For the sake of convenience, we shall call the graphs $C_{2n}\B K_2$ (for $n\geq 2$), $G(10,3)$, and the cubic permutahedron the \emph{basic cubic graphs}. In what follows, we will find all non-basic, cubic, vertex-transitive partial cubes. For this we shall need a simple but technical lemma.

\begin{lem}\label{lem:paste2}
Let $u_0v_0 \Theta u_mv_m$ with $u_m\in U_{u_0v_0}$ in a partial cube $G$. If $P=u_0u_1\ldots u_m$ is a geodesic, then at least one of the following holds (Cases (i)-(iii) are illustrated in Figure \ref{fig:Cases}):

\begin{enumerate}[(i)]
\item There exist vertices $w_{i_1},w_{i_2},\ldots, w_{i_l}\notin V(P)$, for some $l\geq 0$ and $0< i_1<i_2-1,i_2<i_3-1, \ldots, i_{l-1}<i_l-1<m-1$, such that the path $u_0u_1\ldots u_{i_1-1}w_{i_1}u_{i_1+1}\ldots u_{i_l-1}w_{i_l}u_{i_l+1}\ldots u_m$ is the $u_0,u_m$-side of some convex traverse $T$ from $v_0u_0$ to $v_mu_m$.

\item There exist edges $u_{i}z_{i}, u_{j}u_{j+1}$, and vertices $w_{i_1},w_{i_2},\ldots, w_{i_l} \notin V(P)$, for some $l\geq 0$, $0\leq i< i_1<i_2-1,i_2<i_3-1, \ldots, i_{l-1}<i_l-1< j-1 \leq m-1$, such that the path $u_{i}u_{i+1}\ldots u_{i_1-1}w_{i_1}u_{i_1+1}\ldots u_{i_l-1}w_{i_l}u_{i_l+1}\ldots u_{j}$ is the $u_{i},u_{j}$-side of some convex traverse $T$ from $z_{i}u_{i}$ to $u_{j+1}u_{j}$ of length at least two.

\item There exist a vertex $w_i$ adjacent to $u_{i-1}$ and $u_{i+1}$, edges $w_{i}z_{i}$ and $u_{j}u_{j+1}$, and
vertices $w_{i_1},\ldots, w_{i_l}\notin V(P)$, for some $l\geq 0$ and $0< i<i_1-1,i_1<i_2-1, \ldots, i_{l-1}<i_l-1< j-1 \leq m-1$, such that the path $w_{i}u_{i+1}\ldots u_{i_l-1}w_{i_l}u_{i_l+1}\ldots u_{j}$ is the $w_i,u_j$-side of some convex traverse $T$ from $z_{i}w_{i}$ to $u_{j+1}u_{j}$.
\end{enumerate}
\end{lem}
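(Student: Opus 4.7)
My plan is to first invoke Lemma~\ref{lem:cycles} on $(v_0u_0,v_mu_m)$ to obtain a convex traverse $T_0$ with $u_0,u_m$-side $Q=q_0q_1\cdots q_m$, and to compare $Q$ with the given geodesic $P=u_0u_1\cdots u_m$. Both $P$ and $Q$ are $u_0,u_m$-geodesics and so cover the same $m$ $\Theta$-classes (those separating $u_0$ from $u_m$), one edge per class. Let $I=\{i\in\{1,\dots,m-1\}:u_i\neq q_i\}$ be the set of disagreement positions.

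If $I$ contains no two consecutive indices I would conclude case~(i) immediately: at each $i\in I$, the neighbours $q_{i-1}=u_{i-1}$ and $q_{i+1}=u_{i+1}$ coincide on both paths, so $u_i$ and $q_i$ are distinct common neighbours of $u_{i-1}$ and $u_{i+1}$ and span a $4$-cycle $u_{i-1}u_iu_{i+1}q_i$. Setting $w_{i_j}:=q_{i_j}$ at the elements $i_1<\dots<i_l$ of $I$ (whose gaps are $\geq 2$ by assumption) exhibits $Q$ in the required form with the traverse $T_0$.

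Otherwise $I$ contains two consecutive indices, and I would extract a shorter convex traverse that realises case~(ii) or case~(iii). Choose $i$ minimal with $i,i+1\in I$, and $j\geq i+1$ maximal so that $q_{j+1}=u_{j+1}$ (or $j+1=m$); then $P$ and $Q$ disagree solidly on the block $[i,j]$ and rejoin at $j+1$. The final cycle of $T_0$ incident to the block provides the distinguished $\Theta$-class $F^{\star}=[u_ju_{j+1}]$. At the divergence point $i$, the local structure of $T_0$ forces one of two configurations: either $T_0$ still passes through $u_i$ and uses an edge $u_iz_i\in F^{\star}$ with $z_i\notin V(P)$ (case~(ii)), or --- since $i-1\notin I$ --- the first cycle of $T_0$ that leaves $P$ at $i$ is a $4$-cycle $u_{i-1}u_iu_{i+1}w_i$ and $T_0$ continues from $w_i$ via an edge $w_iz_i\in F^{\star}$ (case~(iii)). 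Applying Lemma~\ref{lem:cycles} to $(u_iz_i,u_{j+1}u_j)$ or $(w_iz_i,u_{j+1}u_j)$ then produces a convex traverse $T'$ whose $u$-side, of length $j-i\geq 2$, runs from $u_i$ (resp.\ $w_i$) to $u_j$.

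The main obstacle will be to verify that the $u$-side of $T'$ agrees with the corresponding sub-path of $P$ up to isolated substitutions $w_{i_k}$, i.e.\ that the interior $[i+1,j-1]$ harbours no further pair of consecutive disagreements. I would prove this by strong induction on $m$: applying the lemma to the shorter instance $T'$ cannot recursively trigger cases~(ii) or~(iii), because the outer block $[i,j]$ was chosen maximal, so only case~(i) survives inside, and the gap constraints $i<i_1-1$, $i_k<i_{k+1}-1$, $i_l-1<j-1$ follow from the first paragraph's analysis applied to the sub-interval. The convexity of all cycles involved, together with the Convexity lemma, will be needed to control the interaction between $T'$ and the surrounding structure of $T_0$ and to rule out that the edge $u_iz_i$ (or $w_iz_i$) accidentally lies on $P$.
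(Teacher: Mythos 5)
Your first step and the isolated-disagreement analysis are sound and in fact mirror the paper: the paper also takes a convex traverse from $v_0u_0$ to $v_mu_m$, compares its $u_0,u_m$-side with $P$, and observes that if every place of divergence is a single $4$-cycle substitution then Case~(i) holds. The gap is in how you dispose of a genuine block of disagreement. You assert that the edge $u_iz_i$ (resp.\ $w_iz_i$) at which the traverse-side leaves $P$ lies in the $\Theta$-class $F^{\star}$ of the edge $u_ju_{j+1}$ at the far end of the block, and you then apply Lemma~\ref{lem:cycles} to that pair. This is unjustified. The two arcs bounding the block are geodesics with common endpoints, so they cross the same set of $\Theta$-classes, but the class of the departing edge may reappear at \emph{any} interior edge $u_{k_2}u_{k_2+1}$ of the block; it is forced to reappear at the terminal edge only when the bounding cycle is convex (antipodal edges of a convex cycle are $\Theta$-related). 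The paper's proof is organized around exactly this obstruction: if the bounding cycle $D'$ is convex one gets Case~(ii) at once, and otherwise one recurses on the strictly shorter geodesic $u_{k_1}\ldots u_{k_2}$; the degenerate sub-case in which the inner traverse collapses to a single $4$-cycle $(u_{k_1}u_{k_1+1}u_{k_1+2}z_{k_1+1})$ is then pushed one step further (recursing from $z_{k_1+1}$) and is precisely what produces the vertex $w_i$ and Case~(iii). None of this machinery appears in your outline, and without it the passage from ``consecutive disagreements'' to ``a traverse from $u_iz_i$ to $u_{j+1}u_j$'' does not go through.

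A second, related problem is your closing claim that the recursive application of the lemma inside the block ``cannot trigger Cases~(ii) or~(iii) because the outer block was chosen maximal, so only Case~(i) survives inside.'' The maximality of the block is a statement about the side $Q$ of the \emph{original} traverse $T_0$, whereas the recursive instance concerns the side of a \emph{new} traverse $T'$ produced by Lemma~\ref{lem:cycles}; these are different objects and nothing ties the second to the first. Indeed the correct conclusion is the opposite of what you need: the paper does not exclude Cases~(ii)/(iii) in the recursion but shows they \emph{propagate} (``If Case~(ii), resp.\ Case~(iii), holds for $P'$, then Case~(ii), resp.\ Case~(iii), holds for $P$ as well''), and the whole argument is framed as a minimal-counterexample induction on the length of the geodesic. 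So the skeleton of your approach is compatible with the paper's, but the two steps that carry the actual content --- locating where the $\Theta$-class of the departing edge recurs, and the propagation (not exclusion) of Cases~(ii)/(iii) under the recursion --- are missing or wrong.
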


\begin{proof}

\begin{figure}[h]

\centering
\begin{subfigure}[b]{0.3\textwidth}

\begin{tikzpicture}[style=thick,xscale=0.8,yscale=0.8]
\tikzstyle{vertex}=[draw, circle, inner sep=0pt, minimum size=5.5pt]
\tikzstyle{t}=[ultra thick]

\node (0) at (0  ,0) [vertex,label={[shift={(0.35,-0.45)}]$u_{8}$ }] {};
\node (1) at ( 0  ,1) [vertex,label={[shift={(0.35,-0.45)}]$u_{7}$ }] {};
\node (2) at (1  ,2) [vertex,label={[shift={(0.35,-0.45)}]$u_{6}$ }] {};

\node (3) at ( 0  ,3) [vertex,label={[shift={(0.35,-0.45)}]$u_{5}$ }] {};
\node (4) at (0,4 ) [vertex,label={[shift={(0.35,-0.45)}]$u_{4}$ }] {};
\node (5) at ( 0,5) [vertex,label={[shift={(0.35,-0.45)}]$u_{3}$ }] {};

\node (6) at (1,6) [vertex,label={[shift={(0.35,-0.45)}]$u_{2}$ }] {};
\node (7) at ( 0,7) [vertex,label={[shift={(0.35,-0.45)}]$u_{1}$ }] {};
\node (8) at (0  , 8) [vertex,label={[shift={(0.35,-0.45)}]$u_{0}$ }] {};

\node (9) at ( -1  , 2) [vertex,label={[shift={(0.35,-0.45)}]$w_{6}$ }] {};
\node (10) at ( -1  , 6) [vertex,label={[shift={(0.35,-0.45)}]$w_{2}$ }] {};

\node (11) at ( -2  , 0) [vertex,label={[shift={(-0.35,-0.45)}]$v_{8}$ }] {};
\node (12) at ( -2  , 1) [vertex,label={[shift={(-0.30,-0.68)}]}] {};
\node (13) at ( -3  , 2) [vertex,label={[shift={(-0.35,-0.45)}]}] {};
\node (14) at ( -2  , 3) [vertex,label={[shift={(-0.35,-0.45)}]}] {};
\node (15) at ( -2  , 4) [vertex,label={[shift={(-0.35,-0.45)}]}] {};
\node (16) at ( -2  , 5) [vertex,label={[shift={(-0.35,-0.45)}]}] {};
\node (17) at ( -3  , 6) [vertex,label={[shift={(-0.35,-0.45)}]}] {};
\node (18) at ( -2  , 7) [vertex,label={[shift={(-0.35,-0.45)}]}] {};
\node (19) at ( -2  , 8) [vertex,label={[shift={(-0.35,-0.45)}]$v_{0}$ }] {};

\draw (9) -- (13);
\draw (10) -- (17);
\draw (4) -- (15);

\draw[t] (0) -- (1);
\draw[t] (1) -- (2);
\draw[t] (2) -- (3);
\draw[t] (3) -- (4);
\draw[t] (4) -- (5);
\draw[t] (6) -- (5);
\draw[t] (7) -- (6);
\draw[t] (8) -- (7);

\draw (11) -- (12);
\draw (12) -- (13);
\draw (13) -- (14);
\draw (14) -- (15);
\draw (16) -- (15);
\draw (17) -- (16);
\draw (18) -- (17);
\draw (19) -- (18);

\draw (0) -- (11);
\draw (8) -- (19);
\draw (5) -- (10);
\draw (7) -- (10);

\draw (1) -- (9);
\draw (3) -- (9);

\end{tikzpicture}
\caption{An example of Case (i). The thick edges are the edges of $P$, $w_{i_1}=w_2$, $w_{i_2}=w_6$, $l=2$.}
\end{subfigure}
%\subfigure[Thes claw: $K_{1,3}$]
\quad
\begin{subfigure}[b]{0.3\textwidth}
\begin{tikzpicture}[style=thick,xscale=0.8,yscale=0.8]
\tikzstyle{vertex}=[draw, circle, inner sep=0pt, minimum size=5.5pt]
\tikzstyle{t}=[ultra thick]

\node (0) at (0  ,0) [vertex,label={[shift={(0.50,-0.45)}]$u_{i+6}$ }] {};
\node (1) at ( 0  ,1) [vertex,label={[shift={(0.50,-0.45)}]$u_{i+5}$ }] {};
\node (2) at (1  ,2) [vertex,label={[shift={(0.50,-0.45)}]$u_{i+4}$ }] {};

\node (3) at ( 0  ,3) [vertex,label={[shift={(0.50,-0.45)}]$u_{i+3}$ }] {};
\node (4) at (0,4 ) [vertex,label={[shift={(0.50,-0.45)}]$u_{i+2}$ }] {};
\node (5) at ( 0,5) [vertex,label={[shift={(0.50,-0.45)}]$u_{i+1}$ }] {};

\node (6) at (0,6) [vertex,label={[shift={(0.35,-0.45)}]$u_{i}$ }] {};
\node (7) at ( 0,7) [vertex,label={[shift={(0.50,-0.45)}] }] {};
%\node (8) at (0  , 8) [vertex,label={[shift={(-0.30,-0.68)}]$u_{0}$ }] {};

\node (9) at ( -1  , 2) [vertex,label={[shift={(0.50,-0.45)}]$w_{i+4}$ }] {};
%\node (10) at ( -1  , 6) [vertex,label={[shift={(-0.30,-0.68)}]$w_{2}$ }] {};

\node (11) at ( 0  , -1) [vertex,label={[shift={(0.50,-0.45)}]$u_{i+7}$ }] {};
\node (12) at ( -2  , 1) [vertex,label={[shift={(-0.30,-0.68)}]}] {};
\node (13) at ( -3  , 2) [vertex,label={[shift={(-0.30,-0.68)}]}] {};
\node (14) at ( -2  , 3) [vertex,label={[shift={(-0.30,-0.68)}]}] {};
\node (15) at ( -2  , 4) [vertex,label={[shift={(-0.30,-0.68)}]}] {};
\node (16) at ( -2  , 5) [vertex,label={[shift={(-0.30,-0.68)}]}] {};
\node (17) at ( -2  , 6) [vertex,label={[shift={(-0.35,-0.45)}]$z_i$}] {};
%\node (18) at ( -2  , 7) [vertex,label={[shift={(-0.35,-0.45)}]}] {};
%\node (19) at ( -2  , 8) [vertex,label={[shift={(-0.35,-0.45)}]$z_{0}$ }] {};

%\node (20) at ( 0  , 9) [vertex,label={[shift={(-0.35,-0.45)}]$z_{4}$ }] {};
%\node (21) at ( 0  , 10) [vertex,label={[shift={(-0.35,-0.45)}]$z_{4}$ }] {};
%
%\draw (8) -- (20);
%\draw (20) -- (21);

\node (22) at ( 0  , -2) [vertex,label={[shift={(-0.30,-0.68)}] }] {};

\draw[t] (11) -- (22);
\draw (9) -- (13);
\draw (6) -- (17);
\draw (4) -- (15);

\draw[t] (0) -- (1);
\draw[t] (1) -- (2);
\draw[t] (2) -- (3);
\draw[t] (3) -- (4);
\draw[t] (4) -- (5);
\draw[t] (6) -- (5);
\draw[t] (7) -- (6);
%\draw (8) -- (7);

\draw (11) -- (12);
\draw (12) -- (13);
\draw (13) -- (14);
\draw (14) -- (15);
\draw (16) -- (15);
\draw (17) -- (16);
%\draw (18) -- (17);
%\draw (19) -- (18);

\draw[t] (0) -- (11);
%\draw (8) -- (19);
%\draw (5) -- (10);
%\draw (7) -- (10);

\draw (1) -- (9);
\draw (3) -- (9);

\end{tikzpicture}
\caption{An example of Case (ii). The thick edges are edges in $P$, $u_j=u_{i+6}$, $w_{i_1}=w_{i+4}$, and $l=1$.}
\end{subfigure}
\quad
\begin{subfigure}[b]{0.3\textwidth}
%\subfigure[The claw: $K_{1,3}$]
%{
\begin{tikzpicture}[style=thick,xscale=0.8,yscale=0.8]
\tikzstyle{vertex}=[draw, circle, inner sep=0pt, minimum size=5.5pt]
\tikzstyle{t}=[ultra thick]

\node (0) at (0  ,0) [vertex,label={[shift={(0.50,-0.45)}]$u_{i+6}$ }] {};
\node (1) at ( 0  ,1) [vertex,label={[shift={(0.50,-0.45)}]$u_{i+5}$ }] {};
\node (2) at (1  ,2) [vertex,label={[shift={(0.50,-0.45)}]$u_{i+4}$ }] {};

\node (3) at ( 0  ,3) [vertex,label={[shift={(0.50,-0.45)}]$u_{i+3}$ }] {};
\node (4) at (0,4 ) [vertex,label={[shift={(0.50,-0.45)}]$u_{i+2}$ }] {};
\node (5) at ( 0,5) [vertex,label={[shift={(0.50,-0.45)}]$u_{i+1}$ }] {};

\node (6) at (-1,6) [vertex,label={[shift={(0.50,-0.45)}]$w_{i}$ }] {};
\node (7) at ( 0,7) [vertex,label={[shift={(-0.30,-0.68)}] }] {};
%\node (8) at (0  , 8) [vertex,label={[shift={(-0.30,-0.68)}]$u_{0}$ }] {};

\node (9) at ( -1  , 2) [vertex,label={[shift={(0.50,-0.45)}]$w_{i+4}$ }] {};
%\node (10) at ( -1  , 6) [vertex,label={[shift={(-0.30,-0.68)}]$w_{2}$ }] {};

\node (11) at ( 0  , -1) [vertex,label={[shift={(0.50,-0.45)}]$u_{i+7}$ }] {};
\node (12) at ( -2  , 1) [vertex,label={[shift={(-0.30,-0.68)}]}] {};
\node (13) at ( -3  , 2) [vertex,label={[shift={(-0.30,-0.68)}]}] {};
\node (14) at ( -2  , 3) [vertex,label={[shift={(-0.30,-0.68)}]}] {};
\node (15) at ( -2  , 4) [vertex,label={[shift={(-0.30,-0.68)}]}] {};
\node (16) at ( -2  , 5) [vertex,label={[shift={(-0.30,-0.68)}]}] {};
\node (17) at ( -2  , 6) [vertex,label={[shift={(-0.35,-0.45)}]$z_i$}] {};
%\node (18) at ( -2  , 7) [vertex,label={[shift={(-0.35,-0.45)}]}] {};
%\node (19) at ( -2  , 8) [vertex,label={[shift={(-0.35,-0.45)}]$z_{0}$ }] {};

%\node (20) at ( 0  , 9) [vertex,label={[shift={(-0.35,-0.45)}]$z_{4}$ }] {};
%\node (21) at ( 0  , 10) [vertex,label={[shift={(-0.35,-0.45)}]$z_{4}$ }] {};
%
%\draw (8) -- (20);
%\draw (20) -- (21);

\node (22) at ( 0  , -2) [vertex,label={[shift={(-0.30,-0.68)}] }] {};
\node (23) at (1,6) [vertex,label={[shift={(-0.30,-0.68)}] }] {};

\draw[t] (23) -- (5);
\draw[t] (23) -- (7);

\draw[t] (11) -- (22);
\draw (9) -- (13);
\draw (6) -- (17);
\draw (4) -- (15);

\draw[t] (0) -- (1);
\draw[t] (1) -- (2);
\draw[t] (2) -- (3);
\draw[t] (3) -- (4);
\draw[t] (4) -- (5);
\draw (6) -- (5);
\draw (7) -- (6);
%\draw (8) -- (7);

\draw (11) -- (12);
\draw (12) -- (13);
\draw (13) -- (14);
\draw (14) -- (15);
\draw (16) -- (15);
\draw (17) -- (16);
%\draw (18) -- (17);
%\draw (19) -- (18);

\draw[t] (0) -- (11);
%\draw (8) -- (19);
%\draw (5) -- (10);
%\draw (7) -- (10);

\draw (1) -- (9);
\draw (3) -- (9);

\end{tikzpicture}
\caption{An example of Case (iii). The thick edges are edges in $P$, $u_j=u_{i+6}$, $w_{i_1}=w_{i+4}$, and $l=1$.}
\end{subfigure}
\caption{}
\label{fig:Cases}
\end{figure}
%Check Figure \ref{fig:Cases} for examples of Cases (i)-(iii).
Assume the lemma does not hold and let   $v_0u_0,v_mu_m$ be counterexample edges with geodesic $P=u_0u_1\ldots u_m$ that has length as small as possible. By Lemma \ref{lem:cycles}, there exists a traverse from $v_0u_0$ to $v_mu_m$, and let $P_1$ be the $u_0, u_m$-side of it. If $P_1=P$, then Case (i) in the lemma holds, a contradiction. Thus there exists a cycle $D'=(u_{k_1}u_{k_1+1}\ldots u_{k'}z_{k'-1}\ldots z_{k_1+1})$ for some $0\leq k_1<k'-1\leq m-1$, where the path $u_{k_1}z_{k_1+1}\ldots z_{k'+1}u_{k'}$ is a part of the $u_0,u_m$-side of a traverse from $v_0u_0$ to $v_mu_m$. If this cycle is of length 4, take the next one on $P$ of the same form. If all of them are 4-cycles, we have Case (i), a contradiction.

Therefore assume $D'$ is not a 4-cycle. First assume it is convex. Then edges $u_{k_1}z_{k_1+1}$ and $u_{k'}u_{k'-1}$ are in relation $\Theta$, and $D'$ is a convex traverse from $u_{k_1}z_{k_1+1}$ to $u_{k'}u_{k'-1}$ of length at least two. Thus we have Case (ii).

Now assume $D'$ is not convex. Both $u_{k_1}u_{k_1+1}\ldots u_{k'}$ and $u_{k_1}z_{k_1+1}\ldots z_{k'-1}u_{k'}$ are shortest $u_{k_1}u_{k'}$-paths. Two such paths in a hypercube  cross the same $\Theta$-classes, thus the same must hold in a partial cube.
%There must be exactly two edges in $D'$ that are in $F_{u_{k_1}z_{k_1+1}}$. 
Let $u_{k_2}u_{k_2+1}$, for some $k_1<k_2<k'$, be the edge in $u_{k_1}u_{k_1+1}\ldots u_{k'}$ that is the same $\Theta$-class as the edge ${u_{k_1}z_{k_1+1}}$. Then the path $P'=u_{k_1}\ldots u_{k_2}$ is shorter than $P$, thus the lemma holds for it. If Case (ii), resp. Case (iii), holds for $P'$, then Case (ii), resp. Case (iii), holds for $P$ as well. If Case (i) holds for $P'$ with edges $u_{k_1}z_{k_1+1},u_{k_2}u_{k_2+1}$, and the traverse from $u_{k_1}z_{k_1+1}$ to $u_{k_2}u_{k_2+1}$ is of length at least two, then Case (ii) holds for $P$.
Thus assume the traverse from $u_{k_1}z_{k_1+1}$ to $u_{k_2}u_{k_2+1}$ is of length one, i.e.,~there is a 4-cycle $D''=(u_{k_1}u_{k_1+1}u_{k_1+2}z_{k_1+1})$. 

Since $D'$ is not a 4-cycle, it holds that $D''\neq D'$.
Let $z_{k_1+2}$ be the neighbor of $z_{k_1+1}$ in $D'$, different from $u_{k_1}$. For the same reasons as above, there must exist an edge $u_{k_3}u_{k_3+1}$, for $k_1+2\leq k_3 <k'$, in $u_{k_1}u_{k_1+1}\ldots u_{k'}$ that is the same $\Theta$-class as the edge $z_{k_1+1}z_{k_1+2}$. Again, the isometric path $P''=z_{k_1+1}u_{k_1+2}\ldots u_{k_3}$ is shorter than $P$, thus the lemma holds for $P''$ with edges $z_{k_1+1}z_{k_1+2}$ and $u_{k_3}u_{k_3+1}$.

If there exists a 4-cycle $(z_{k_1+1}w_{k_1+2}u_{k_1+3}u_{k_1+2})$ for some vertex $w_{k_1+2}$, then Case (iii) holds for edges $z_{k_1+1}w_{k_1+2}$ and $u_{k_1+2}u_{k_1+3}$ with the traverse being a 4-cycle. Thus assume  there is no such 4-cycle.

If Case (i) holds for $P''$ with $z_{k_1+1}z_{k_1+2}$ and $u_{k_3}u_{k_3+1}$, then Case (iii) holds for $P$. If Case (ii) holds for $P''$ with $z_{k_1+1}z_{k_1+2}$ and $u_{k_3}u_{k_3+1}$, then Case (ii) or Case (iii) holds for $P$. Finally, if Case (iii) holds for $P''$ with $z_{k_1+1}z_{k_1+2}$ and $u_{k_3}u_{k_3+1}$, then Case (iii) holds for $P$ since there is no 4-cycle of the form $(z_{k_1+1}w_{k_1+2}u_{k_1+3}u_{k_1+2})$.
\end{proof}

As we can see in Figure \ref{fig:Cases}, Lemma \ref{lem:paste2} provides a traverse $T$ that is attached to the path $P$ as in one of the Cases (i)-(iii) with (possibly) some 4-cycles in between $T$ and $P$. We will use this fact in the following way. First we will show that if we can find in a non-basic, cubic, vertex-transitive partial cube $G$ a convex traverse attached to a side of another convex traverse in a nice way, then the high density of convex cycles in this part of $G$ will imply that some cycles must be small. In particular, we will show that $g_2(G)=6$. Second, we will find this situation in $G$ by considering two complementary types of graph: ether we allow convex cycles to intertwine or we do not. Finally, we will conclude the proof by classifying cubic, vertex transitive partial cubes with $g_1(G)=4, g_2(G)=6$ and $g_3(G)>6$, by using certain results from group theory.

\begin{lem}\label{lem:constHs}
Let $G$ be a non-basic, cubic, vertex-transitive partial cube. Let there be a traverse $T_1$ from $u_0x_0$ to $u_mx_m$
% with isometric cycles $C_1^1,\ldots,C_1^{l_1}$ and  
with $P_1=u_0\ldots u_m$ being the $u_0,u_m$-side of it. If we have a traverse $T_2$ attached to $P_1$ as in one of the Cases (i)-(iii) from Lemma \ref{lem:paste2}, with additional assumption that the convex cycles on $T_1$ and convex cycles on $T_2$ pairwise share at most an edge, then $g_2(G)=6$.
\end{lem}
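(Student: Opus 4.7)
I argue by contradiction: suppose $g_2(G)\ge 8$. By Corollary~\ref{cor:basic} and the non-basic hypothesis we have $g_1(G)=4$, so Lemma~\ref{lem:allincident} implies that at every vertex of $G$ exactly one incident edge pair (call it pair~1) lies on a 4-cycle, while the other two pairs admit only convex cycles of length at least $8$. The goal is to produce a convex 6-cycle through an edge pair that is \emph{not} pair~1 at the relevant vertex, which forces $g_2(G)\le 6$ and yields the desired contradiction.

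The key structural observation is that the hypothesis ``cycles of $T_1$ and $T_2$ pairwise share at most an edge'', combined with Claim~\ref{clm:convex_always_intert}, forces the $u_0u_m$-sides along $P_1$ of any $T_1$-cycle $D^1$ and any $T_2$-cycle $D^2$ to overlap in at most one edge; otherwise $D^1$ and $D^2$ would share two edges and intertwine. In particular, at each interior $u_i$ of $P_1$ lying on the $u$-side of $T_2$, at least one of $T_1,T_2$ must have $u_i$ as a boundary vertex, since otherwise $D^1,D^2$ through $u_i$ would both contain the edges $u_{i-1}u_i$ and $u_iu_{i+1}$. This also shows that no $T_j$-cycle through an interior $P_1$-vertex can be a 4-cycle with that vertex interior, since a 4-cycle has a $u$-side of length $1$.

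The main step locates the desired 6-cycle by examining the initial cycle $D^2_1$ of $T_2$ (or, in Case~(i) of Lemma~\ref{lem:paste2}, the $T_2$-cycle adjacent to a deviation vertex $w_{i_j}$), together with the $T_1$-cycle sharing a $P_1$-edge with it. If $|D^2_1|=6$, Claim~\ref{clm:4cycle_no_intert} prevents the edge pair of $D^2_1$ at the junction from being pair~1 (a 4-cycle through this pair would share two edges with $D^2_1$), hence this pair is pair~2 or pair~3 and $g_2(G)\le 6$, contradicting the assumption. If $|D^2_1|=4$, I iterate along $T_2$: two consecutive 4-cycles of $T_2$ at a shared boundary vertex $u$ would give \emph{two} distinct 4-cycles through $u$, forcing $g_2(G)=4$ and contradicting $g_2(G)\ge 6$; hence the next $T_2$-cycle has length at least~$6$, and the overlap argument applied to it reproduces a 6-cycle through a non-pair-1 pair. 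Finally, if $|D^2_1|\ge 8$ then the $u$-side of $D^2_1$ has length at least~$3$, and at each of its interior $P_1$-vertices the overlap constraint forces a $T_1$-boundary; examining the adjacent $T_1$-cycle (or the terminal $T_2$-cycle by the symmetric argument) again furnishes a convex 6-cycle through a non-pair-1 pair, using Claim~\ref{clm:4cycle_no_intert} exactly as above.

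The main obstacle is the delicate case split across Cases~(i)--(iii) of Lemma~\ref{lem:paste2}: each configuration presents slightly different junction geometry (endpoint $u_i$ in Case~(ii), vertex $w_i$ in Case~(iii), deviation $w_{i_j}$ in Case~(i)), and at every step one must correctly identify which pair at the relevant vertex is pair~1, using the $\Theta$-classes of $T_1, T_2$ and repeated application of Claim~\ref{clm:4cycle_no_intert}, so as to ensure the produced 6-cycle does indeed pass through a non-pair-1 pair.
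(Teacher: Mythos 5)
Your overall strategy---use the ``cycles of $T_1$ and $T_2$ pairwise share at most an edge'' hypothesis to force boundary vertices of the two traverses to alternate densely along $P_1$, and conclude that some convex cycle must be short---is the same idea the paper's proof is built on, and two of your observations are correct and useful: any convex $6$-cycle whatsoever gives $g_2(G)=6$ (since by Claim~\ref{clm:4cycle_no_intert} it cannot share two edges with a $4$-cycle), and no vertex of $P_1$ can be interior to the sides of both a $T_1$-cycle and a $T_2$-cycle. But the argument has a genuine gap at its central step. In the case $|D^2_1|\ge 8$, and in the subcase of $|D^2_1|=4$ where the next $T_2$-cycle has length at least $8$ rather than exactly $6$, you assert that the overlap argument ``furnishes'' or ``reproduces'' a convex $6$-cycle without constructing one; nothing you have written excludes, say, an $8$-cycle of $T_1$ facing a $4$-cycle followed by an $8$-cycle of $T_2$. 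What closes this is a finer analysis of the deviation vertices $w_{i_j}$: a $4$-cycle $(u_{k-1}u_ku_{k+1}w_k)$ sitting between the traverses forces, via Claim~\ref{clm:4cycle_no_intert}, a boundary of $T_1$ at $u_k$ \emph{and} a boundary of $T_2$ at $w_k$. In Case (i), where cubicity forces both traverses to leave $u_0$ along $u_0u_1$, this pins the configuration down completely: if $T_1$ starts with a $(2l+2)$-cycle, $l\ge 3$, then $T_2$ must start with a $4$-cycle ending at $u_1$; its second cycle contains $u_1u_2$ but cannot contain $u_2u_3$, so it deviates to some $w_3$; the resulting in-between $4$-cycle forces $l=3$ and a $T_2$-boundary at $w_3$; hence the second cycle of $T_2$ has side $u_1u_2w_3$ of length exactly two and is the desired convex $6$-cycle. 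Your sketch never reaches a cycle whose side has length exactly two, which is the whole point.

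The second missing ingredient is the reduction of Cases (ii) and (iii), where $T_2$ begins and ends in the interior of $P_1$ and there is no common starting vertex to anchor the junction analysis. One must show that the last cycle of $T_2$ contains $u_{j-1}u_ju_{j+1}$ (ruling out an in-between $4$-cycle at the end, which would force two incident $4$-cycles), so the $T_1$-cycle through $u_{j-1}u_j$ cannot also contain $u_ju_{j+1}$ and must terminate at $u_j$ in an edge $u_jx_j$ with $u_jx_j\,\Theta\, u_0x_0$; truncating $T_1$ at that edge produces a traverse $T_1'$ ending at the same vertex $u_j$ as $T_2$, recreating the Case (i) junction in mirror image. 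Your phrase ``or the terminal $T_2$-cycle by the symmetric argument'' gestures at this but omits the truncation step, which is precisely what makes the symmetry available. Finally, the ``pair 1 versus pair 2/3'' bookkeeping that you single out as the main obstacle is a non-issue for the reason you yourself give at the start: any convex $6$-cycle witnesses $g_2(G)=6$, so there is nothing to track.
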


\begin{proof}
Graph $G$ is non-basic, therefore $g_1(G)=4$ and $g_2(G)\geq 6$. Notice that  to prove $g_2(G)=6$ it is enough to find one convex 6-cycle. By Claim \ref{clm:4cycle_no_intert}, this 6-cycle cannot share more than an edge with any 4-cycle, thus it gives its contribution to $g_2(G)=6$.

Let $T_1,T_2$ be as in the assertion of the lemma; we adapt the notation from the Lemma \ref{lem:paste2}. If the length of $T_2$ (which is the shortest of the two traverses) is 2, then there exist two incident 4-cycles on $T_1$, which cannot be by Lemma \ref{lem:two4cycles}, or it includes a convex 6-cycle and we are done. If the length is 1, we have Case (i) or (iii) - in both cases there exist two incident 4-cycles, a contradiction. Thus assume the length of both traverses is at least 3. Also notice that if a 4-cycle of the form $(u_{k-1}u_ku_{k+1}w_k)$ is in between $T_1$ and $T_2$, then by Claim \ref{clm:4cycle_no_intert} there must be edges $u_kx_k$ and $w_kv_k$ such that two convex cycles of $T_1$ share $u_kx_k$, and two convex cycles of $T_2$ share $w_kv_k$.

First assume that $T_2$ is attached as in Case (i). The traverse $T_1$ starts in the edge $u_0x_0$, $T_2$ starts in the edge $u_0v_0$ and they both share the edge $u_0u_1$ since $G$ is cubic. The first cycle on $T_1$ and the first cycle on $T_2$ cannot be both 4-cycles since such cycles are not incident, by Lemma \ref{lem:two4cycles}. If one of them starts with a 6-cycle, we are done. Without loss of generality assume that $T_1$ starts with a $(2l+2)$-cycle for $2l+2\geq 8$. By the last statement of the previous paragraph, there cannot be a 4-cycle in between $T_1$ and $T_2$ of the form $(u_{k'-1}u_{k'}u_{k'+1}w_{k'})$ for $0\leq k'<l$. Thus path $u_0u_1\ldots u_{l-1}$ must be the beginning of the sides of $T_1$ and $T_2$. Since any two cycles of $T_1$ and $T_2$ share at most an edge and no two 4-cycles are incident, the only option is that $l=3$, $T_2$ starts with a 4-cycle, and there is a 4-cycle $(u_{2}u_{3}u_{4}w_{3})$ in between $T_1$ and $T_2$. Then the second cycle of $T_2$ must end in some edge $w_3v_3$, thus it must be a convex 6-cycle (see Figure \ref{fig:cases12a}).

\begin{figure}[h]

\centering
\begin{subfigure}[b]{0.3\textwidth}

\begin{tikzpicture}[style=thick,scale=0.9]
\tikzstyle{vertex}=[draw, circle, inner sep=0pt, minimum size=5.5pt]

\node (0) at (0  ,0) [vertex,label={[shift={(-0.30,-0.68)}]$v_{0}$ }] {};
\node (1) at ( 2  ,0) [vertex,label={[shift={(-0.30,-0.68)}]$u_{0}$ }] {};
\node (2) at (4  ,0) [vertex,label={[shift={(-0.30,-0.68)}]$x_{0}$ }] {};

\node (3) at ( 0  ,-1) [vertex,label={[shift={(-0.30,-0.68)}] }] {};
\node (4) at (2,-1 ) [vertex,label={[shift={(-0.30,-0.68)}]$u_{1}$ }] {};
\node (5) at ( 4,-1) [vertex,label={[shift={(-0.30,-0.68)}] }] {};

\node (6) at (0,-2) [vertex,label={[shift={(-0.30,-0.68)}] }] {};
\node (7) at ( 2,-2) [vertex,label={[shift={(-0.30,-0.35)}]$u_{2}$ }] {};
\node (8) at (4  , -2) [vertex,label={[shift={(-0.30,-0.68)}] }] {};

\node (9) at ( 0  , -3) [vertex,label={[shift={(-0.30,-0.68)}]$v_{3}$ }] {};
\node (10) at ( 1  , -3) [vertex,label={[shift={(-0.30,-0.68)}]$w_3$ }] {};
\node (11) at ( 3  , -3) [vertex,label={[shift={(-0.30,-0.45)}]$u_{3}$ }] {};
\node (12) at ( 4  , -3) [vertex,label={[shift={(-0.30,-0.68)}]$x_{3}$ }] {};

\node (13) at ( 2  , -4) [vertex,label={[shift={(-0.30,-0.68)}]$u_{4}$ }] {};

\draw (0) -- (1);
\draw (1) -- (2);
\draw (0) -- (3);
\draw (1) -- (4);
\draw (3) -- (4);
\draw (2) -- (5);
\draw (3) -- (6);
\draw (4) -- (7);
\draw (5) -- (8);
\draw (6) -- (9);
\draw (7) -- (10);
\draw (7) -- (11);
\draw (12) -- (8);
\draw (12) -- (11);
\draw (10) -- (9);
\draw (10) -- (13);
\draw (11) -- (13);
\end{tikzpicture}
\caption{Subgraph in Case (i)}
\label{fig:cases12a}
\end{subfigure}
%\subfigure[Thes claw: $K_{1,3}$]
%{
\begin{subfigure}[b]{0.3\textwidth}

\begin{tikzpicture}[style=thick,scale=0.9]
\tikzstyle{vertex}=[draw, circle, inner sep=0pt, minimum size=5.5pt]

\node (0) at (2  ,-1) [vertex,label={[shift={(-0.30,-0.68)}]$u_{j+1}$ }] {};
\node (1) at ( 2  ,0) [vertex,label={[shift={(-0.30,-0.68)}]$u_{j}$ }] {};
\node (2) at (4  ,0) [vertex,label={[shift={(-0.30,-0.68)}]$x_{j}$ }] {};

\node (3) at ( 0  ,1) [vertex,label={[shift={(-0.30,-0.68)}] }] {};
\node (4) at (2,1 ) [vertex,label={[shift={(-0.45,-0.68)}]$u_{j-1}$ }] {};
\node (5) at ( 4,1) [vertex,label={[shift={(-0.30,-0.68)}] }] {};

\node (6) at (0,2) [vertex,label={[shift={(-0.30,-0.68)}] }] {};
\node (7) at ( 2,2) [vertex,label={[shift={(-0.50,-0.45)}]$u_{j-2}$ }] {};
\node (8) at (4  , 2) [vertex,label={[shift={(-0.30,-0.68)}] }] {};

\node (9) at ( 0  , 3) [vertex,label={[shift={(-0.45,-0.68)}]$v_{j-3}$ }] {};
\node (10) at ( 1  , 3) [vertex,label={[shift={(0.55,-0.50)}]$w_{j-3}$ }] {};
\node (11) at ( 3  , 3) [vertex,label={[shift={(0.50,-0.68)}]$u_{j-3}$ }] {};
\node (12) at ( 4  , 3) [vertex,label={[shift={(0.50,-0.68)}]$x_{j-3}$ }] {};

\node (13) at ( 2  , 4) [vertex,label={[shift={(0.50,-0.30)}]$u_{j-4}$ }] {};

\draw (0) -- (1);
\draw (1) -- (2);
\draw (0) -- (3);
\draw (1) -- (4);
\draw (3) -- (4);
\draw (2) -- (5);
\draw (3) -- (6);
\draw (4) -- (7);
\draw (5) -- (8);
\draw (6) -- (9);
\draw (7) -- (10);
\draw (7) -- (11);
\draw (12) -- (8);
\draw (12) -- (11);
\draw (10) -- (9);
\draw (10) -- (13);
\draw (11) -- (13);
\end{tikzpicture}
\caption{Subgraph in Cases (ii), (iii)}
\label{fig:cases12b}
\end{subfigure}
\caption{}
\label{fig:cases12}
\end{figure}

Second assume we have Case (ii) or (iii) and let $u_ju_{j+1}$ be the ending edge of $T_2$. Let $D_2$ be the last convex cycle of $T_2$, the one that includes $u_{j}u_{j+1}$. Assume the neighbor of $u_{j}$ on a side of $T_2$ is a vertex $w_{j-1}$ that is not on $P_1$, or in other words, we have a 4-cycle $(u_{j}u_{j-1}u_{j-2}w_{j-1})$ in between $T_1$ and $T_2$. Since $D_2$ and this 4-cycle share at most an edge, there must exist an edge $w_{j-1}z_{j-1}$ in relation $\Theta$ with $u_{j}u_{j+1}$, i.e.,~$D_2$ is a 4-cycle. Then there exist two incident 4-cycles which cannot be.

By the above, we can assume that $T_2$ ends with a convex cycle $D_2$ that includes $u_{j-1}u_{j}u_{j+1}$ in $P_1$. Let $D_1$ be the convex cycle on $T_1$ that is incident with the edge $u_{j}u_{j-1}$. Since $D_2$ and $D_1$ share at most an edge, and $u_{j-1}u_{j}u_{j+1}$ is a part of a side of $T_1$, the cycle $D_1$ must end in an edge $u_{j}x_{j}\in V(D_1)$ in relation $\Theta$ with edge $u_0x_0$. The part of the traverse $T_1$ from $u_{0}x_{0}$ to $u_jx_j$ is also traverse, say $T_1'$. Now starting from the end of traverses $T_1'$ and $T_2$ we have a similar situation as before: incident edges $u_jx_j, u_ju_{j+1}$ in which traverses end (before they started), and an edge $u_ju_{j-1}$ on the sides of $T_1'$ and $T_2$. Similar arguments as before lead us to a convex 6-cycle (see Figure \ref{fig:cases12b}).
\end{proof}

We will now consider two kinds of partial cubes: the ones that have intertwining convex cycles and the ones that do not. In both cases we will use Lemma \ref{lem:constHs} to show that $g_2(G)=6$ for a non-basic, cubic, vertex-transitive partial cube.

\begin{prop}\label{prop:nointertwine}
Let $G$ be a non-basic, cubic, vertex-transitive partial cube. If no two convex cycles share more than an edge, then $g_2(G)=6$.
\end{prop}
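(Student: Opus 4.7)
My plan is to construct convex traverses $T_1$ and $T_2$ satisfying the hypothesis of Lemma~\ref{lem:constHs} and then apply that lemma to conclude $g_2(G)=6$. Under the non-intertwining hypothesis of the proposition, the extra assumption of Lemma~\ref{lem:constHs}---that convex cycles of $T_1$ and $T_2$ pairwise share at most an edge---is automatic, so only the existence of a suitably attached pair of traverses needs to be exhibited.

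The first step is to record structural consequences. Since $G$ is non-basic, Corollary~\ref{cor:basic} gives $g_1(G)=4$, so $G$ contains a $4$-cycle $D_0=(u_0,u_1,u_1',u_0')$, and Lemma~\ref{lem:two4cycles} implies that every vertex of $G$ lies in exactly one $4$-cycle. Combined with Lemma~\ref{lem:allincident} and the non-intertwining hypothesis, this forces every edge of $G$ to lie in exactly two convex cycles. In particular, the convex cycle distinct from $D_0$ through any edge $u_0u_1$ of $D_0$ is a long convex cycle $C$ of length at least $g_2(G)$, meeting $D_0$ only in $u_0u_1$.

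For the construction I would take $T_1=C$, regarded as a single-cycle convex traverse whose endpoint edges are a pair of opposite edges of $C$ chosen so that the $u_0$-side $P_1$ of $T_1$ begins with $u_0u_1$; then $P_1$ has length $|C|/2-1\geq 2$. For $T_2$, I would exploit the edge-connectivity bound $|F_{u_0u_0'}|\geq 3$ valid in any cubic vertex-transitive graph: the $\Theta$-class $F_{u_0u_0'}$ contains $u_0u_0'$, its opposite $u_1u_1'$ in $D_0$, and at least one further edge $e'$. Lemma~\ref{lem:cycles} produces a convex traverse $T_2$ from $u_0u_0'$ to $e'$, and applying Lemma~\ref{lem:paste2} to an appropriate subpath of $P_1$ (using $u_0u_0'\Theta u_1u_1'$ along the initial edge $u_0u_1$, and then extending via $e'$) positions $T_2$ attached to $P_1$ in one of the Cases (i)--(iii). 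Lemma~\ref{lem:constHs} then yields $g_2(G)=6$.

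The main obstacle I anticipate is ensuring that $T_2$ genuinely attaches to $P_1$ in the required form, in particular that $T_2$ has length at least two, so as not to reduce to the trivial choice $T_2=D_0$, which would leave $T_2$-side equal only to the single edge $u_0u_1$ and would not fit the framework of Lemma~\ref{lem:constHs} directly. The alternation of ``$4$-cycle edges'' and ``matching edges'' along $C$---a consequence of every vertex lying on exactly one $4$-cycle---constrains where the third edge at each vertex of $P_1$ points, and this rigidity, together with the non-intertwining hypothesis restricting the overlap of convex cycles, should pin down the edge $e'$ and allow $T_2$ to route through the long convex cycle at $u_1u_1'$ so that its side tracks $P_1$ with only the permitted kinds of detours. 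The heart of the technical work is this combinatorial alignment.
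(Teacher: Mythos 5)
Your high-level strategy is the same as the paper's: exhibit a convex traverse $T_1$ with side $P_1$ and a second traverse $T_2$ attached to $P_1$ as in Lemma~\ref{lem:paste2}, note that the non-intertwining hypothesis makes the extra condition of Lemma~\ref{lem:constHs} automatic, and conclude. But the construction of $T_2$ --- which you yourself flag as ``the heart of the technical work'' --- is exactly the part that is missing, and the mechanism you propose for it does not work as stated. The edge $e'$ you obtain from $|F_{u_0u_0'}|\geq 3$ can sit anywhere in the graph: nothing ties it to the long convex cycle $C$ or to the path $P_1$. Lemma~\ref{lem:paste2} needs, as input, a \emph{geodesic subpath of $P_1$ whose two endpoints carry edges of a common $\Theta$-class pointing off $P_1$}; the only such configuration you actually produce is $u_0u_0'\Theta u_1u_1'$ across the single edge $u_0u_1$, which yields only the $4$-cycle $D_0$ as a length-one traverse --- the trivial case you correctly note is useless. ``Extending via $e'$'' is not an argument: the convex traverse from $u_0u_0'$ to $e'$ given by Lemma~\ref{lem:cycles} has no reason to track $P_1$, so Cases (i)--(iii) are never verified.

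The paper closes this gap with a different device. It chooses a $\Theta$-related pair $z_0y_0,\,z_my_m$ at \emph{maximal} distance, takes a convex traverse $T_1$ between them, and attaches the convex cycle $D$ through $z_m,y_m$ and the third neighbour $x_1$ of $z_m$ (Lemma~\ref{lem:allincident}). Maximality forbids $T_1\cup D$ from being a traverse to the opposite edge $x_{k-1}x_k$ of $D$, and a convexity argument shows the edge $z_mx_1$ lies on neither $T_1$ nor the traverse $T_2$ from $z_0y_0$ to $x_{k-1}x_k$. The closed walk formed by the sides of $T_1$, $T_2$ and part of $D$ must then cross the $\Theta$-class $F_{z_mx_1}$ a second time, and that forces an edge of this class onto a side of $T_1$ or $T_2$ --- precisely the input Lemma~\ref{lem:paste2} needs. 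Some such global forcing argument (maximality plus the closed-walk parity property of $\Theta$-classes) is indispensable here; your local analysis around one $4$-cycle cannot supply it. Your auxiliary claims are also slightly overstated (each edge lies in \emph{at least} two convex cycles by Lemma~\ref{lem:allincident}; ``exactly two'' is not justified, though it is not needed), but the substantive defect is the unconstructed attachment of $T_2$ to $P_1$.
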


\begin{proof}
Let $G$ be as in the assertion of the lemma. We want to prove that the situation from Lemma \ref{lem:constHs} occurs in $G$.
Take two edges $z_0y_0$ and $z_my_m$ that are in relation $\Theta$, and assume the distance between them is maximal among all such pairs. Let $T_1$ be a convex traverse from $z_0y_0$ to $z_my_m$, provided by Lemma \ref{lem:cycles}. Let $x_1$ be the neighbor of $z_m$, different from its two neighbors on the traverse. By Lemma \ref{lem:allincident}, vertices $z_m,y_m,x_1$ lie in some convex cycle $D$. Denote the vertices of $D$ by $D=(x_0x_1x_2\ldots x_{2k-1})$, where $x_0=z_m$ and $x_{2k-1}=y_m$. We have $z_0y_0\Theta z_my_m \Theta x_{k-1}x_k$. 

By the maximality assumption, the sequence of cycles on $T_1$ together with cycle $D$ is not a traverse from $z_0y_0$ to $x_{k-1}x_k$.
On the other hand, by Lemma \ref{lem:cycles}, there is a convex traverse $T_2$ from $z_0y_0$ to $x_{k-1}x_k$. 
Denote by $P_1',P_1''$ the $z_0,z_m$-, $y_0,y_m$-side of $T_1$ and by $P_2',P_2''$ be the $z_0,x_{k-1}$-, $y_0,x_k$-side of $T_2$, respectively. The edge $z_mx_1$ lies in $D$ but does not lie in $T_1$. Assume it lies in $T_2$, say on the convex cycle $D'$ of $T_2$. As noted in the preliminaries, none of the edges with exactly one end in $D'$ is in relation $\Theta$ with any edge of $D'$ since $D'$ is convex. By definition, $D'$ has an edge in relation $\Theta$ with $z_my_m$. Thus $z_my_m$ has both its ends in $D'$, i.e.,~it lies in $D'$. Since convex cycles share at most an edge, $D'=D$ and thus $T_2$ includes $D$. Let $D''=(z_mz_{m-1}\ldots z_{m'}y_{m'}y_{m'+1}\ldots y_m)$ be the last cycle on $T_1$. Since $G$ is cubic, the traverse $T_2$ must include edge $z_mz_{m-1}$. Then also the cycle $D''$ must be in $T_2$ since convex cycles share at most an edge. Inductively we can show that $T_1\subset T_2$. A contradiction since $D\cup T_1$ is not a traverse, thus $z_mx_1$ do not lie on $T_2$.

On the other hand, $z_mx_1$ lies on a closed walk starting in $z_0$ passing a side of $D$, $P_1'$ and $P_2'$. Since in partial cubes every such closed walk must pass the $\Theta$-class of $z_mx_1$ at least twice, there must be another edge on it in relation $\Theta$ with $z_mx_1$. Since $D$ is convex, it must be on $P_1'$ or $P_2'$.

If there is an edge in relation $\Theta$ with $z_mx_1$ on $P_1'$, then Lemma \ref{lem:paste2}, provides a convex traverse $T_3$ attached to $P_1'$ as in Cases (i)-(iii). By Lemma \ref{lem:constHs} and the fact that convex cycles in $G$ share at most an edge, we have $g_2(G)=6$. If there is an edge in relation $\Theta$ with $z_mx_1$ on $P_2'$, then there also exists an edge in relation $\Theta$ with $z_mx_1$ on $P_2''$. The latter holds since paths $P_2'$ and $P_2''$ cross the same $\Theta$-classes. Considering this edge on $P_2''$ and the edge $x_{k}x_{k+1}$, with $x_{k}x_{k+1}\Theta z_mx_1$, Lemma \ref{lem:paste2} gives a convex traverse $T_4$ attached to $P_2''$ as in Cases (i)-(iii). By Lemma \ref{lem:constHs}, it holds that $g_2(G)=6$. This concludes the proof.
\end{proof}

\begin{prop}\label{prop:yesintertwine}
If $G$ is a non-basic, cubic, vertex-transitive partial cube with two convex cycles that share more than one edge, then $g_2(G)=6$.
\end{prop}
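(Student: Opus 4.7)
The plan is a proof by contradiction: I assume $g_2(G)\geq 8$. Since $G$ is non-basic, $g_1(G)=4$, so by Lemma \ref{lem:two4cycles} each vertex has exactly one pair of incident edges on a $4$-cycle, while the other two pairs carry only convex cycles of length $\geq g_2\geq 8$. By hypothesis and Claim \ref{clm:convex_always_intert}, I take intertwining convex cycles $D_1,D_2$ sharing path $P=v_0v_1\ldots v_m$ with $m\geq 2$. At an interior vertex $v_i$ of $P$, Claim \ref{clm:4cycle_no_intert} rules out any $4$-cycle through the pair $(v_{i-1}v_i,v_iv_{i+1})$ (such a cycle would share two edges with $D_1$), so this pair's shortest convex cycle has length $\geq g_2\geq 8$, forcing $|D_1|,|D_2|\geq 8$.

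The decisive step will be a case analysis at $v_1$, writing $w_1$ for its third neighbor. The pair $(v_0v_1,v_1v_2)$ is not the $g_1$-pair by the above; if $(v_0v_1,v_1w_1)$ were, the forced $4$-cycle $(v_0,v_1,w_1,z)$ would require $z$ to be a common neighbor of $v_0$ and $w_1$, but the only candidates are $v_{2m+2n_1-1}\in D_1$ and $u_{2m+2n_2-1}\in D_2$, each yielding a $4$-cycle sharing two edges with $D_1$ or $D_2$ and contradicting Claim \ref{clm:4cycle_no_intert}. Hence the $g_1$-pair at $v_1$ is $(v_1v_2,v_1w_1)$, with a $4$-cycle $(v_1,v_2,z,w_1)$. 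When $m=2$, $v_2=v_m$ has only neighbors $v_1,v_{m+1},u_{m+1}$, so $z\in\{v_{m+1},u_{m+1}\}$, and both options again violate Claim \ref{clm:4cycle_no_intert}: the $m=2$ case is closed.

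For $m\geq 3$, $v_2$ is interior with third neighbor $w_2$, forcing $z=w_2$ and producing the $4$-cycle $F=(v_1,v_2,w_2,w_1)$. I next apply Lemma \ref{lem:allincident} to obtain a convex cycle $H_1$ through the pair $(v_1w_1,w_1w_1')$ at $w_1$ (with $w_1'$ its third neighbor); since $F$ already realizes the $g_1$-pair at $w_1$, $|H_1|\geq g_2\geq 8$. The edge of $H_1$ at $v_1$ other than $v_1w_1$ cannot be $v_1v_2$ (else $H_1=F$ by Claim \ref{clm:4cycle_no_intert}), so it must be $v_0v_1$, and at $v_0$ the cycle continues to $v_{2m+2n_1-1}$ or $u_{2m+2n_2-1}$; thus $H_1$ shares a length-$2$ path with $D_1$ or $D_2$, creating a new intertwining pair by Claim \ref{clm:convex_always_intert}. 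I then re-apply the $v_1$-analysis to this new pair at its interior vertex $v_0$: two of its three incident pairs are immediately blocked (by $D_1\cap H_1$ and by $D_2$), so the remaining pair $(v_{2m+2n_1-1}v_0,v_0u_{2m+2n_2-1})$ must be the $g_1$-pair, with a $4$-cycle whose common neighbor $z'$ I enumerate, checking that in each case the $4$-cycle shares two edges with $D_1$, $D_2$, or $H_1$. The main obstacle will be this final enumeration, especially the sub-case where $z'$ is a new vertex that is the common third neighbor of $v_{2m+2n_1-1}$ and $u_{2m+2n_2-1}$; I expect this to require careful use of the shared structure between $H_1$ and $D_1$, or an iterative application of the argument to a further convex cycle.
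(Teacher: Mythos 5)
Your local analysis is correct as far as it goes -- the exclusion of $m=2$, the forced $4$-cycle $F=(v_1,v_2,w_2,w_1)$, and the fact that the $g_1$-pair at $v_0$ must be $(av_0,v_0b)$ with $a=v_{2m+2n_1-1}$, $b=u_{2m+2n_2-1}$ are all sound -- but the gap you flag at the end is genuine and is not closable by the kind of enumeration you propose. In the surviving sub-case the $4$-cycle at $v_0$ is $Q=(v_0,a,z',b)$ with $z'$ a \emph{new} common third neighbour of $a$ and $b$. This configuration shares only the single edge $v_0a$ with $D_1$, only $v_0b$ with $D_2$, and only one edge with $H_1$, so it contradicts neither Claim \ref{clm:4cycle_no_intert} nor Lemma \ref{lem:two4cycles}. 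Pushing your own machinery further does not break it: writing $a'$ for the $D_1$-neighbour of $a$ other than $v_0$, the cycle $H_1$ is forced to leave $a$ along $aa'$ (leaving along $az'$ would make $H_1$ share the two edges $v_0a,az'$ with $Q$), so $H_1$ and $D_1$ share the path $v_1v_0aa'$; but re-running your interior-vertex analysis on this new intertwining pair is \emph{satisfied} by the same $4$-cycle $Q$, which pairs the interior vertices $v_0$ and $a$ exactly as $F$ pairs $v_1$ and $v_2$. The forced structure thus propagates outward consistently and indefinitely; nothing in it produces a convex $6$-cycle or a contradiction, so "an iterative application of the argument to a further convex cycle" only reproduces the same locally consistent picture.

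What is missing is a global or extremal input, and that is exactly what the paper's proof supplies. It chooses the intertwining pair with \emph{minimal} residue of intertwining, observes that the edges $v_{2m+n_1-1}v_{2m+n_1}$ and $u_{2m+n_2-1}u_{2m+n_2}$ on the far sides of the two cycles are $\Theta$-equivalent, joins them by a convex traverse $T_1$ (Lemma \ref{lem:cycles}) of length at most $n_1+n_2$, attaches a second traverse to a side of $T_1$ via Lemma \ref{lem:paste2}, and then uses the minimality of $n_1+n_2$ to show the cycles of the two traverses pairwise share at most an edge -- which is the hypothesis of Lemma \ref{lem:constHs}, the engine that actually manufactures the convex $6$-cycle. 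Your argument never exploits the $\Theta$-relation between the far sides of $D_1$ and $D_2$, nor any extremal choice among intertwining pairs, and without some ingredient of that kind the contradiction you are aiming for is out of reach.
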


\begin{proof}
Assume that two convex cycles in $G$ share more than an edge. 
%We will prove that the situation from Lemma \ref{lem:constHs} occurs in $G$.
By Claim \ref{clm:convex_always_intert}, the two convex cycles intertwine.
Denote the vertices of the intertwining cycles by $(v_0v_1\ldots v_mv_{m+1}\ldots v_{2m+2n_1-1})$ and $(u_0u_1\ldots u_mu_{m+1}\ldots u_{2m+2n_2-1})$  where $u_0=v_0,\ldots ,u_m=v_m$. By definition $m\geq 2$; moreover assume that the residue of intertwining $n_1+n_2$ is minimal among all pairs of intertwining convex cycles. 

It holds that $v_{2m+n_1}v_{2m+n_1-1} \Theta v_{m-1}v_m \Theta u_{2m+n_2}u_{2m+n_2-1}$ and that $v_{2m+n_1-1}v_{2m+n_1-2} \Theta v_{m-2}v_{m-1} \Theta u_{2m+n_2-1}u_{2m+n_2-2}$. By Lemma \ref{lem:cycles}, we have a traverse $T_1$ from  $v_{2m+n_1-1}v_{2m+n_1}$ to $u_{2m+n_2-1}u_{2m+n_2}$. Let $P_1$ be the $v_{2m+n_1-1},u_{2m+n_2-1}$-side of it.
% and $P_2$ the $v_{2m+n_1},u_{2m+n_2}$-side of it. 
We denote by  $D_1,\ldots,D_i$ the convex cycles on it, and let vertices in $P_1$ be denoted by $P_1=z_0z_1\ldots z_{k}$, where $v_{2m+n_1-1}=z_0$ and $u_{2m+n_2-1}=z_k$. Notice that $v_{2m+n_1}v_{2m+n_1+1}\ldots v_{2m+2n_1-1}v_{0}u_{2m+2n_2-1}\ldots u_{2m+n_2}$ is a $v_{2m+n_1} , u_{2m+n_2}$-path of length $n_1+n_2$, thus the length of the traverse $T_1$ is at most $n_1+n_2$.

Consider the isometric path $P_1$ and edges $v_{2m+n_1-1}v_{2m+n_1-2}$ and $u_{2m+n_2-1}u_{2m+n_2-2}$ that are in relation $\Theta$. By Lemma \ref{lem:paste2}, we have a traverse $T_2$ attached to $P_1$ as in one of the Cases (i)-(iii).
%Let $w_{i_1},w_{i_2},\ldots, w_{i_l}$, for some $l\geq 0$ and $0< i_1<i_2-1,i_2<i_3-1, \ldots, i_{l-1}<i_l-1<k-1$, be  vertices on a side of $T_2$, that are not on a side $T_1$, and are a part of 4-cycles in between both traverses,  as in Lemma \ref{lem:paste2}.  
To prove that the situation from Lemma \ref{lem:constHs} occurs in $G$ and thus that $g_2(G)=6$, we have to prove that the convex cycles of $T_1$ and the convex cycles of $T_2$ pairwise intersect in at most an edge.

Let $E_1,\ldots, E_{\tilde{i}}$ be convex cycles that form the traverse $T_2$ from $v_{2m+n_1-1}v_{2m+n_1-2}$ to $u_{2m+n_2-1}u_{2m+n_2-2}$ in Case (i),  from $z_{i_1}y_{i_1}$ to $z_{i_2}z_{i_2+1}$ in Case (ii), or from $w_{i_1}y_{i_1}$ to $z_{i_2}z_{i_2+1}$ in Case (iii), for some $0\leq i_1<i_2< k ,w_{i_1},y_{i_1}$.
We want to prove that no two cycles from $D_1,\ldots, D_i,E_1,\ldots,E_{\tilde{i}}$ intersect in more than an edge. For the sake of contradiction, assume $E_{\tilde{l}}$ and $D_{\tilde{k}}$ share at least two edges. By Claim \ref{clm:convex_always_intert} they intertwine. We will get to a contradiction by showing that $E_{\tilde{l}}$ and $D_{\tilde{k}}$ have the residue of intertwining smaller than $n_1+n_2$.

We have multiple options. Firstly, assume that $E_{\tilde{l}}$ intersects with exactly one of the 4-cycles in between the traverses, say with $(z_{l'-1}z_{l'}z_{l'+1}w_{l'})$ for some $0<l'<k, w_{l'}$. Since $G$ is cubic $z_{l'-1}\neq z_0 =v_{2m+n_1-1}$, thus $l'\geq 2$. By Claim \ref{clm:4cycle_no_intert}, $E_{\tilde{l}}$ shares exactly an edge with the 4-cycle. Without loss of generality we can assume a side of $E_{\tilde{l}}$ is of a form $w_{l'}z_{l'+1}\ldots z_{l''}$, for some $2\leq l'<l''\leq n_1+n_2$, where $w_{l'}z_{l'+1}$ is an edge of the 4-cycle in between $T_1$ and $T_2$. A side of $D_{\tilde{k}}$ is of a form $z_{k'}z_{k'+1}\ldots z_{k''}$, for some $0\leq k'<k''\leq n_1+n_2$. Since $D_{\tilde{k}}$ does not intersect in more than an edge with the 4-cycle $(z_{l'-1}z_{l'}z_{l'+1}w_{l'})$, we have $2\leq l' \leq k'$. Moreover, because $E_{l'}$ and $D_{k'}$ intersect in at least two edges, it holds that $k''-l'\geq 3$, and thus $k''\geq 5$. 

Assume $l'=k'$ and $k''\leq l''$. Recall that the residue of intertwining is calculated as $i(E_{\tilde{l}},D_{\tilde{k}})=(d_1+d_2-4d_3)/2$, where $d_1$ is the length of $E_{\tilde{l}}$, $d_2$ the length of $D_{\tilde{k}}$ and $d_3$ the number of edges they share.  In particular, for $E_{\tilde{l}}$ and $D_{\tilde{k}}$ we get:
\begin{align*}
i(E_{\tilde{l}},D_{\tilde{k}}) & =(2(l''-l')+2+2(k''-k')+2-4(k''-(k'+1)))/2 \\
& =l''-k''-l'+k'+4\leq (n_1+n_2)-5+4<n_1+n_2.
\end{align*}
A contradiction. We get similar results in all the other cases, that is, if $l'=k'$, but $l''\leq k''$, and if $l' < k'$ and $k''\leq l''$ or $k''\geq l''$.

Now assume $E_{\tilde{l}}$ intersects with exactly two of the 4-cycles in between the traverses, i.e.,~a part of $E_{\tilde{l}}$ is of the form $w_{l'}z_{l'+1}\ldots z_{l''-1},w_{l''}$, where $w_{l'}z_{l'+1}$ and $z_{l''-1}w_{l''}$ are edges of the 4-cycles in between the traverses. Then, since a side of $D_{\tilde{k}}$ is of the form $z_{k'}z_{k'+1} \ldots z_{k''}$, we can, similarly as above, show that $D_{\tilde{k}}$ and $E_{\tilde{l}}$ intertwine with the residue of intertwining smaller than $n_1+n_2$.

Since $E_{\tilde{l}}$ does not intersect in more than an edge with 4-cycles in between the traverses, it cannot be incident with more than two of them. Therefore, we can assume it does not intersect with any of them. In this case a side of $E_{\tilde{l}}$ is of the form $z_{l'}z_{l'+1} \ldots z_{l''}$, while a side of $D_{\tilde{k}}$ is of the form $z_{k'}z_{k'+1} \ldots z_{k''}$. Assume $l'\leq k'<l''\leq k''$, i.e.,~they share the path between $z_{k'}$ and $z_{l''}$ (with $l''-k'\geq 2$ since they intertwine). We have
$$i(E_{\tilde{l}},D_{\tilde{k}})=(2(l''-l')+2+2(k''-k')+2-4(l''-k'))/2=k''-l'-2(l''-k'-2)\leq k''-l'\leq n_1+n_2.$$
Since the equality must hold, we have $l''-k'=2$. We know that $G$ must have girth 4. Thus the vertex $z_{k'+1}$ must be incident with a 4-cycle $D_4$. Since $G$ is cubic, this 4 cycle must share two consecutive edges with  $D_{\tilde{k}}$ or $E_{\tilde{l}}$, a contradiction with Claim \ref{clm:4cycle_no_intert}. We get similar outcomes with the other positions of $l',k',l'', k''$. This proves that cycles of $T_1$ and $T_2$ pairwise share at  most an edge, concluding the proof.
\end{proof}

The following proposition, combined with Propositions \ref{prop:nointertwine}, \ref{prop:yesintertwine}, Corollary \ref{cor:basic}, and Lemma \ref{lem:6-cycle} proves Theorem \ref{thm:thethm}. To prove it we shall need a result from group theory. A group of the form $\langle \alpha_1,\ldots, \alpha_n\, |\, (\alpha_i\alpha_k)^{k_{ij}}=1 \rangle$, where $k_{ii}=1,k_{ij}=k_{ji}>1$ is called a \emph{Coxeter group}. The finite Coxeter groups are classified with their Cayley graphs being partial cubes due to their connection with oriented matroids and reflection arrangements \cite[Chapter 2.3]{matroid_bj}. Limiting ourselves to those Coxeter groups whose Cayley graphs are cubic, we are left with the following graphs: cubic permutahedron, the truncated cuboctahedron, and the truncated icosidodecahedron.

\begin{prop}\label{lem:spec}
If $G$ is a non-basic, cubic, vertex-transitive partial cube with $g_2(G)=6$, then  $G$ is isomorphic to the truncated cuboctahedron or the truncated icosidodecahedron.
\end{prop}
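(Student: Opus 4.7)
The plan is to identify $G$ as the Cayley graph of a finite rank-three Coxeter group and then appeal to the classification cited just before the proposition. First, Lemma \ref{lem:6-cycle} upgrades the hypothesis $g_2(G)=6$ to $g_3(G)>6$: were $g_3=6$, the graph would have to be the cubic permutahedron, which is basic. So the three convex cycles incident to each vertex have pairwise distinct lengths $4$, $6$, and $g_3$.

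I next argue the vertex stabilizer in the automorphism group is trivial. Any automorphism $\sigma$ fixing a vertex $v$ permutes the three neighbors of $v$ and preserves convex cycles together with their lengths; since the three unordered pairs of neighbors are distinguished by the distinct cycle lengths $4$, $6$, $g_3$ they lie on, each such pair is preserved setwise. A permutation of $\{v_1,v_2,v_3\}$ that preserves all three pairs must be the identity, so each neighbor is fixed, and connectedness then forces $\sigma=\textrm{id}$ on all of $G$. By Sabidussi's theorem, $G \cong \textrm{Cay}(A,S)$ with $|S|=3$. To show $S$ consists of three involutions, suppose instead $S=\{s_1,s_2,s_2^{-1}\}$ with $s_2^2\neq e$. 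Right multiplication by $s_2^{-1}$, being a graph automorphism, carries $\{e,s_2\}$ to $\{s_2^{-1},e\}=\{e,s_2^{-1}\}$, placing these two edges in the same $\Theta$-class yet incident at $e$. This violates the basic partial cube fact that each vertex carries at most one edge per $\Theta$-class.

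With $S=\{s_1,s_2,s_3\}$ a set of involutions, the subgroup $\langle s_i,s_j\rangle$ is dihedral of order $2k_{ij}$ with $k_{ij}=|s_is_j|$, and gives a $2k_{ij}$-cycle in $G$ through $e$ alternating the edges $\{e,s_i\}$ and $\{e,s_j\}$. Tracing the convex cycle on this pair of incident edges (guaranteed by Lemma \ref{lem:allincident}) and using the $\Theta$-class characterization of convex cycles in partial cubes, one identifies it with the dihedral $2k_{ij}$-cycle. Hence $\{2k_{12},2k_{13},2k_{23}\}=\{4,6,g_3\}$, so up to relabeling the Coxeter exponents are $(2,3,g_3/2)$. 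The standard presentation yields a surjection from the Coxeter group $W=W(2,3,g_3/2)$ onto $A$, and I would argue this is an isomorphism by a $\Theta$-class argument: a nontrivial kernel would collapse two preimages of $e$ in $\textrm{Cay}(W)$ into a single vertex of $G$, identifying distinct $\Theta$-classes of $\textrm{Cay}(W)$ and violating the partial cube structure of $G$.

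Consequently $A$ is a finite rank-three Coxeter group with exponents $(2,3,g_3/2)$; finiteness demands $\tfrac12+\tfrac13+\tfrac{2}{g_3}>1$, i.e.\ $g_3<12$. Combined with $g_3>6$ and evenness, this forces $g_3\in\{8,10\}$, giving the Coxeter groups $B_3$ and $H_3$ whose Cayley graphs are the truncated cuboctahedron and the truncated icosidodecahedron, respectively. The main obstacle is justifying that the surjection $W\twoheadrightarrow A$ is an isomorphism: ruling out proper finite quotients of $W$ whose Cayley graphs are partial cubes requires unpacking the correspondence between $\Theta$-classes of a Cayley partial cube and the reflections of the underlying Coxeter structure.
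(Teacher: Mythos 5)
Your proposal follows essentially the same route as the paper: distinctness of the three convex cycle lengths $4,6,g_3$ at a vertex kills the vertex stabilizer, Sabidussi makes $G$ a Cayley graph on three involutions, the three convex cycles through a vertex yield the Coxeter relations $(rg)^2,(gb)^3,(br)^{g_3/2}$, and the classification of finite Coxeter groups finishes. Two points, though. First, your argument that the generators are involutions is wrong as stated: an automorphism of a partial cube \emph{permutes} the $\Theta$-classes, it does not preserve them, so the fact that right multiplication carries $\{e,s_2\}$ to $\{e,s_2^{-1}\}$ does not place these edges in a common $\Theta$-class. The paper's device repairs this: after checking that $4$-cycles, convex $6$-cycles and convex $k$-cycles pairwise share at most one edge (a step you also need before speaking of ``the'' convex cycle on a pair of incident edges), colour each edge by the unordered pair of lengths of the two convex cycles through it; the three colours $\{4,6\},\{4,k\},\{6,k\}$ are distinct, the colouring is automorphism-invariant, and the regular action sends $\{e,s_2^{-1}\}$ to $\{s_2,e\}$, so two edges at $e$ would share a colour unless $s_2=s_2^{-1}$.

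Second, the obstacle you flag---that you only obtain a surjection $W(2,3,k/2)\twoheadrightarrow A$ rather than an isomorphism---is genuine, and it is exactly the point where the paper itself is terse (``Then this is a Coxeter group''); the intended justification is the cited correspondence between finite Coxeter systems and Cayley-graph partial cubes generated by involutions, which is precisely the statement you say you would need to unpack. Note, however, that the numerical conclusion $k<12$ does not require resolving this: once the three cycle types pairwise share at most an edge, pasting discs on all of them gives a closed surface as in Lemma \ref{lem:6-cycle}, with
$\chi=n\left(1-\tfrac{3}{2}+\tfrac{1}{4}+\tfrac{1}{6}+\tfrac{1}{k}\right)=n\left(\tfrac{1}{k}-\tfrac{1}{12}\right)>0$,
forcing $k\in\{8,10\}$ and bounding $n$ by $48$ resp.\ $120$; only the final identification of $G$ with the truncated cuboctahedron or icosidodecahedron still leans on knowing that $A$ is the full Coxeter group (or on a direct check of the finitely many remaining candidates).
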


\begin{proof}
Let $G$ be a non-basic, cubic, vertex-transitive partial cube, with $g_2(G)=6$. Then $g_1(G)=4$ and $g_3(G)=k>6$, by Corollary \ref{cor:basic} and Lemma \ref{lem:6-cycle}. It follows that the vertex-stabilizers of the automorphism group of $G$ are trivial. Thus $G$ is a Cayley graph. Notice that by Claim \ref{clm:4cycle_no_intert} each 4-cycle shares at most one edge with any convex 6- and $k$-cycle. Moreover, if a convex 6-cycle $D_6$ shares more than an edge with a convex $k$-cycle $D_k$, then they  intertwine by Claim \ref{clm:convex_always_intert}. Since both cycles are convex, they can share at most two consecutive edges, say $v_1v$ and $vv_2$. If this is the case, let $D_4$ be the convex 4-cycle incident with $v$. Then $D_4$ must share two edges with $D_6$ or $D_k$ since $G$ is cubic. A contradiction, thus also convex 6-cycles and convex $k$-cycles share at most an edge. 

Color the edges that simultaneously lie in a 4-cycle and a convex 6-cycle green, the edges in a 4-cycle and a convex $k$-cycle red, and the edges in a convex 6-cycle and a convex $k$-cycle blue. By the above discussion the colors are well defined. We get the relations
$\langle r, g, b\, |\, r^2, g^2, b^2, (rg)^2, (gb)^3, (br)^{k/2}\rangle.$
Then this is a Coxeter group. By the classification of finite Coxeter groups, and the fact that $k>6$, the only possibility is that $k$ equals 8 or 10. Moreover, in the first case $G$ is isomorphic to the truncated cuboctahedron, while in the second it is isomorphic to the truncated icosidodecahedron.
\end{proof}

\section{Concluding remarks}

With this article we have provided a classification of cubic, vertex-transitive partial cubes. Since the variety  of such graphs is rather small, it suggests that a similar classification can be done for graphs with higher valencies. The latter problem is wide open. The regular graphs in the subcubic cases can be seen as the beginnings of greater families of vertex-transitive partial cubes: $G(10,3)$ as a middle level graph;  the cubic permutahedron, the truncated cuboctahedron, the truncated icosidodecahedron, even cycles, and $K_2$ as Cayley graphs of finite Coxeter groups; and even prisms as the Cartesian products of the latter graphs.
To our knowledge these families are the only known examples of vertex-transitive partial cubes. We do not assume that these are the only ones, but we would like to propose the following conjecture:

\begin{conj}
The middle level graphs are the only vertex-transitive partial cubes with girth six.
\end{conj}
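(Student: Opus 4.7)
My plan is to attack the conjecture by generalizing the techniques of this paper — convex traverses, the isometric subgraph $X$, and the $\Theta$-class analysis — from the cubic setting to arbitrary valency, and then identifying the resulting combinatorial structure with the Johnson-type incidence that characterizes middle level graphs. Note that at valency three the conjecture already follows from Theorem \ref{thm:thethm}, because $G(10,3)$, the unique vertex-transitive cubic partial cube of girth six, is itself the middle level graph of valency three. So one may assume the valency $d\geq 4$, and since $G$ is bipartite with girth six there are no $4$-cycles.

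The first step is to establish a higher-valency analog of Lemma \ref{lem:6and4cycle}: any two isometric $6$-cycles sharing two consecutive edges should force an isometric subgraph whose structure matches the restriction of the middle level graph to the corresponding three coordinates, playing the role that $X$ played in the cubic case. Iterating via vertex-transitivity, and using Lemma \ref{lem:cycles} and Lemma \ref{lem:paste2} to propagate convex traverses across the graph, the goal is to show that the $\Theta$-classes through any vertex $v$ form a Johnson-like scheme: each of the $d$ edges at $v$ determines a distinct $\Theta$-class, any three such $\Theta$-classes jointly lie on a common convex $6$-cycle, and the total number of $\Theta$-classes is exactly $2d-1$. Once this combinatorial data is in hand, I would exhibit the isomorphism by mapping each vertex $x$ to the set of $\Theta$-classes separating $x$ from a fixed base vertex $v_0$; the no-$4$-cycle condition combined with the $6$-cycle incidence should force this set to have size $d-1$ or $d$ and adjacency to correspond to single-element symmetric difference, recovering the middle level graph.

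The main obstacle, and the reason the statement is stated only as a conjecture, is the passage from local $6$-cycle incidence (which vertex-transitivity controls via the convex-traverse machinery of this paper) to the strong global symmetry that pins the number of $\Theta$-classes down to exactly $2d-1$. Vertex-transitivity alone ensures every vertex looks the same locally but does not a priori rule out vertex-transitive ``twisted covers'' of the middle level graph with more $\Theta$-classes and the same local $6$-cycle pattern. Handling this would presumably require a counting argument tying the number of $\Theta$-classes to the order of the stabilizer of a $\Theta$-class, perhaps together with a Cayley-graph analysis in the style used in Proposition \ref{lem:spec} — showing that the generating involutions corresponding to edges at a vertex satisfy relations compatible only with the symmetric group $S_{2d-1}$ acting on $\binom{[2d-1]}{d-1}\cup\binom{[2d-1]}{d}$. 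Carrying out this final rigidity step is where I expect the proof to be most delicate.
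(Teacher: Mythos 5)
The statement you are asked about is a \emph{conjecture}: the paper explicitly leaves it open in the concluding remarks and offers no proof, so there is nothing in the paper to compare your argument against. More importantly, what you have written is not a proof either --- it is a research plan, and you yourself concede that the decisive step (``the passage from local $6$-cycle incidence \dots to the strong global symmetry that pins the number of $\Theta$-classes down to exactly $2d-1$'') is not carried out. That step is not a technical loose end; it \emph{is} the content of the conjecture. Everything before it --- the reduction to $d\geq 4$, the hoped-for higher-valency analog of Lemma \ref{lem:6and4cycle}, the claim that any three $\Theta$-classes at a vertex lie on a common convex $6$-cycle, the claim that the number of $\Theta$-classes is $2d-1$ --- is asserted as a goal rather than established, and several of these intermediate claims are themselves nontrivial open statements (for instance, nothing in the paper's machinery bounds the isometric dimension of a vertex-transitive partial cube of girth six in terms of its valency).

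A concrete reason to doubt that your proposed closing move can work as stated: you suggest a Cayley-graph analysis ``in the style used in Proposition \ref{lem:spec},'' but that proposition's entry point is that $g_1(G)<g_2(G)<g_3(G)$ forces the vertex stabilizers to be trivial, whence $G$ is a Cayley graph by Sabidussi's theorem. In the girth-six setting all convex cycles through a vertex may have the same length, so the stabilizers need not be trivial --- indeed the middle level graphs themselves have nontrivial vertex stabilizers (already $G(10,3)$, with automorphism group of order $120$ on $20$ vertices, has stabilizers of order $6$). So the Coxeter-presentation argument does not transfer, and you would need an entirely different mechanism to exclude the ``twisted covers'' you mention. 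The correct assessment is that your proposal is a reasonable outline of how one might attack the conjecture, but it does not constitute a proof, and the gap you flag is exactly where the problem remains open.
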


This paper was motivated by a computer search for partial cubes on a census of cubic, vertex-transitive graphs up to 1280 vertices \cite{potovcnik2013cubic}. For the search we transferred the basis in Sage environment \cite{sage} and used Eppstein's algorithm \cite{eppstein2011recognizing} for the recognition of partial cubes. We are thankful to the authors of the census and the algorithms. We would also like to thank Sandi Klavžar and the reviewers for useful comments on the text.

\bibliography{biblio}

\end{document}